\documentclass[journal]{IEEEtran}
\usepackage[T1]{fontenc}
\usepackage{mathptmx}
\usepackage{amsmath,amssymb,amsthm}
\usepackage{graphicx,booktabs,multirow,xcolor,tikz,url}
\usetikzlibrary{positioning,arrows.meta,fit,shapes.geometric}
\usepackage[hidelinks]{hyperref}
\hypersetup{pdftitle={Type-Safe Decision Frameworks for Agentic 5G Control},pdfauthor={Michail-Alexandros Kourtis, George Xilouris}}
\newtheorem{theorem}{Theorem}
\newtheorem{proposition}{Proposition}
\newtheorem{lemma}{Lemma}
\newtheorem{corollary}{Corollary}
\theoremstyle{definition}
\newtheorem{definition}{Definition}
\newcommand{\ACT}{\textsc{act}}
\newcommand{\ESC}{\textsc{esc}}
\newcommand{\ABS}{\textsc{abs}}
\newcommand{\Prob}{\mathbb{P}}

\begin{document}
\title{Type-Safe Decision Frameworks for Agentic 5G Control: A Theory-Driven Testbed Characterization of Where They Can Be Applied}
\author{Michail-Alexandros~Kourtis and George~Xilouris%
\thanks{M.-A. Kourtis and G. Xilouris are with the National Centre for Scientific Research ``Demokritos'' (NCSRD),
Athens, Greece.}}
\maketitle

\begin{abstract}
This paper presents a theory-driven characterization of type-safe decision frameworks for the agentic control of 5G
networks, where every decision must be an element of a declared option set rather than free text. Three design points are
evaluated on an Open5GS/UERANSIM testbed with a closed core-policy loop, namely a hosted typed model (Jev), an open
fine-tunable typed encoder (Laya), and a zero-label retrofit of a general language model (AnyJev). The proposed
theoretical framework turns timeliness, type conformance, certification cost and cardinality into checkable
applicability predicates, supported by an optimal act/escalate/abstain gate, an escalation-feasibility floor, a
co-location stability condition, per-type conformal risk control with a certification label floor, and a
type-mismatch bound. Measuring every predicate yields an applicability map from framework to 5G decision
class. Type safety removes format failures but not the question: the fine-tuned typed encoder returned its training
answer for 98--99.5\% of changed questions, and its calibrated gate then acted wrongly on up to
80\% of them, whereas the question-reading frameworks acted wrongly on at most
0.143 (Jev) and 0.137 (AnyJev) of any changed question, but were either hosted and
11--29 times slower (Jev) or reliant on an 8B language model (AnyJev).
\end{abstract}
\begin{IEEEkeywords}
Agentic AI, type safety, conformal risk control, O-RAN, 5G core network
\end{IEEEkeywords}

\section{Introduction}
\IEEEPARstart{T}{he} convergence of agentic Artificial Intelligence (AI) with the programmable control surfaces of 5G and
6G networks, namely the RAN Intelligent Controllers (RICs) of the Open Radio Access Network (O-RAN) and the service-based
5G Core (5GC), has created the need for new requirements in the design and deployment of network control loops in terms
of correctness, timeliness, auditability and energy efficiency~\cite{survey-agentic,agentic-airan}. An agentic controller turns
observations into actions: it selects a handover target, decides whether a slice's Service-Level Agreement (SLA) is
breached, grades an anomaly, or chooses which slice holds a priority flow. The network, however, can only execute an
action that belongs to the set of options it exposed, which imposes a stringent contract on the controller's output.

Large Language Models (LLMs) answer in free text, and therefore fail this contract by format, by parsing, or by answering
a different question than the one asked. Constraining their decoding to a grammar restores the output domain but
reduces accuracy, since it removes the free-text reasoning pass on which the model relies~\cite{speakfreely}, even when the constraint itself is enforced efficiently~\cite{outlines}.
Recent agentic O-RAN designs address the same concern at the system level: A1gent compiles operator goals into typed A1
policy instances enforced by deterministic near-real-time (near-RT) applications~\cite{a1gent}, the Contract-based
Agentic Intent Framework (CAIF) audits intents against formal RAN constraints before actuation~\cite{caif}, while
xTRUCE and AURA arbitrate conflicting agent proposals with safety and stability guarantees~\cite{xtruce,aura}. These designs govern actions \emph{after} an agent has proposed them; they do not characterize the decision function
that produces each proposal, which is where type safety, calibration and latency are decided.

In this respect, a new class of \emph{type-safe decision frameworks} has emerged, which returns, by construction, a
probability distribution over the declared options instead of text. Jev is offered by TypeSafe AI
as a hosted ``System One'' model with a stated end-to-end response time of 70--500\,ms~\cite{jev}; Laya is an open
421M-parameter typed encoder built on ModernBERT-large and trained with strictly proper scoring rules~\cite{laya,modernbert};
and AnyJev is a layer from Nokia Applied Research that reads an open LLM as a typed decider, without training at its
zero-label level, through the
probabilities of the option letters~\cite{anyjev}. Each framework guarantees the output type, yet none has been
characterized against the constraints that 5G control actually imposes: the deadlines of the O-RAN control
loops~\cite{oran-wg1}, questions whose type is declared only at runtime, the number of labels needed before an action
can be trusted, and the co-location of several models on a single edge Graphics Processing Unit (GPU).

This paper presents a theory-driven characterization of the three frameworks, which aims to answer what each one
offers, what each one costs, and \emph{where each one can be applied} in a 5G control stack. The contribution of the
presented work is four-fold: (i) a theoretical framework with eight evaluation axes, whose results compose into
applicability predicates for timeliness, type, coverage and cardinality, and which contributes a co-location stability
condition, a certification label floor, and a type-mismatch bound that is attained by question-blind models; (ii) an
experimental characterization of all three frameworks on an Open5GS/UERANSIM platform, which integrates a core-policy
loop from the Network Data Analytics Function (NWDAF) through the Policy Control Function (PCF) and the Session
Management Function (SMF) to the User Plane Function (UPF), with every endpoint pre-registered and reported whether it held or failed; (iii) an applicability
map from framework to 5G decision class, where every cell is either a measurement or a named violated predicate; and
(iv) a set of findings that the map makes precise, specifically that type safety removes format failures but not the
question, that the design points form a trade-off rather than a ranking, that a zero-shot framework is not zero-label
for certified action, that free-text escalation to an 8B-class on-edge LLM is infeasible at or below 1\,s on the
evaluated edge GPU, and that a co-located LLM breaks a 20\,Hz batch-1 decision loop at any escalation rate.

The paper is organized as follows. Section~\ref{sec:related} presents the type-safe frameworks and the related
technologies. Section~\ref{sec:model} defines the system model, and Section~\ref{sec:theory} presents the theoretical
framework. Section~\ref{sec:testbed} describes the evaluation architecture and methodology, Section~\ref{sec:results}
reports the characterization per axis, and Section~\ref{sec:apply} derives where each framework applies. Section~\ref{sec:discussion} discusses the
architectural, methodological and deployment implications, Section~\ref{sec:limits} states the limitations, and finally
Section~\ref{sec:concl} concludes the paper.

\section{Type-Safe Decision Frameworks and Related Technologies}
\label{sec:related}

\subsection{Type-safe decision frameworks}
Jev is TypeSafe AI's first ``System One'' model, described by its vendor as a function call that maps unstructured state
to typed probabilistic decisions; it samples in parallel instead of generating tokens sequentially, it is trained with a
method termed Reinforcement Learning for Calibrated Decisions (RLCD), and it is stated to answer in 70--500\,ms end to
end, 40--200$\times$ faster than LLMs on comparable tasks~\cite{jev}. The model is proprietary and is accessed through a
hosted Application Programming Interface (API), so the evaluated version (jev-1.13.0) is fixed only by its
identifier. Laya is a non-autoregressive decision model with 421M parameters, composed of a ModernBERT-large
backbone~\cite{modernbert} and a decision head with an option-marker scorer and an act/escalate output; it is trained with
reinforcement learning under strictly proper scoring rules, it is released under Apache-2.0, and it is stated to answer
in about 33\,ms per forward pass~\cite{laya}. AnyJev turns an open LLM into a typed decider: its zero-label level L0 removes the position bias of the option letters by averaging over cyclic option orders and divides out a label-free
prior, L1 adds temperature scaling on 100--500 labels, and L2 fits a closed-form head on a mid-depth hidden state from
100--300 labels per question layout, with heads shipped for the Qwen3 family~\cite{anyjev,qwen3}. The three frameworks
therefore span three distinct design points, namely a hosted specialist, an open fine-tunable specialist and an open
retrofit, and each is type-safe by construction. None of them, however, states under which deadlines, question types
and label budgets its guarantees hold in a network.

\subsection{Agentic control in O-RAN and the 5G Core}
The agentic view of the RAN organizes the task landscape around slice life-cycle management, Radio Resource Management
(RRM) closed loops and cross-cutting security, and introduces planning, tool use and self-management gates into
long-lived control loops~\cite{agentic-airan}. A1gent separates non-real-time (non-RT) agentic reasoning from near-RT deterministic
execution through typed A1 policy instances and a fixed-priority action merger~\cite{a1gent}; CAIF decouples
probabilistic intent extraction from strictly governed policy execution and evaluates the design on an O-RAN platform
for network slicing~\cite{caif}. When several agents act on shared resources, xTRUCE places a provably safe arbiter in the
near-RT RIC with a three-layer constraint hierarchy~\cite{xtruce}, and AURA shows on a live O-RAN system that two agents
with individually correct objectives drive recurring opposing excursions, which an arbitration layer with feasibility
invariants and dwell times removes~\cite{aura}. OpenTwin certifies a digital twin by re-simulation and evaluates each
xApp action before it executes, after observing that an E2 control request may be reported as successful while the base
station never applies it~\cite{opentwin}. A recent tutorial maps agentic capabilities onto the 5G and 6G control planes
and standardization~\cite{survey-agentic}. These works secure the \emph{composition} of agent actions; the presented work
is complementary and characterizes the \emph{decision function} that produces each action, which is where type safety,
calibration and latency are decided.

\subsection{Certified abstention and conformal methods in wireless networks}
Chow's reject rule trades error against the rate of refusal~\cite{chow}, and selective classification extends the trade
to deep networks~\cite{geifman}. Split conformal prediction~\cite{vovk} and Conformal Risk Control (CRC)~\cite{crc} make
such trades valid in finite samples under exchangeability, and role-stratified CRC assigns a separate threshold and risk
budget to each argument role of an LLM tool call, so that a rare high-risk field is not averaged away~\cite{rolecrc}. In
wireless networks, meta-learned context-dependent conformal prediction calibrates O-RAN applications when calibration and
runtime contexts differ~\cite{metacp}, post-hoc conformal prediction quantifies, after the fact, the miscoverage of prediction sets whose size is fixed by an
operational constraint rather than by a prescribed level~\cite{posthoccp}, and confounding-valid conformal inference bounds counterfactual Key Performance Indicators (KPIs) from logged telemetry complemented by scarce randomized
telemetry~\cite{cfkpi}. Cascades that defer to a stronger model on low confidence expose a further
attack surface, since an adversary can lower the weak model's confidence and force deferral~\cite{forceddeferral}. The
presented work applies CRC to the joint event \{\ACT{} and wrong\} per decision type, and shows why its guarantee must be
indexed by the declared question type and not only by the output domain.

\subsection{Control timescales, serving and emulation}
O-RAN places near-RT control loops between 10\,ms and 1\,s and non-RT loops above 1\,s~\cite{oran-wg1}, while
5GC analytics through the NWDAF~\cite{3gpp-23288} and policy authorization over the N5 interface~\cite{3gpp-29514}
operate at the core-policy timescale. Stochastic network calculus has been used to bound the probability that the delay of Ultra-Reliable Low-Latency
Communications (URLLC) services exceeds its budget in the near-RT loop~\cite{oranus}. An LLM acting as a
second-stage decider is typically served with batched decoding~\cite{vllm} and quantized weights~\cite{awq}, which fixes
its time per output token and couples its latency to the load. Finally, open-source 5G platforms expose similar
interfaces under different timing fidelity, so that functional compatibility does not imply timing
fidelity~\cite{atlasran}; the presented platform therefore scopes its RAN-side timing by a real-time factor.

\section{System Model and Type Safety}
\label{sec:model}
The proposed system model separates the declared decision from the framework that answers it, in order to express every
property of the characterization as a property of a (framework, decision class) pair.
\begin{definition}[Decision class]
A decision class $d$ is a question template $t_d$ with option schema $\mathcal{O}_d$, $|\mathcal{O}_d| = K_d$, a deadline
$D_d$, a deadline-miss budget $\varepsilon_d$, a risk budget $\alpha_d$, a coverage floor $c_d$, a network latency
$L_{\mathrm{net},d}$ and a host.
A \emph{decision type} is the pair $(t_d, \mathcal{O}_d)$.
\end{definition}
\begin{definition}[Type-safe framework]\label{def:ts}
A System-1 $F$ is type-safe for $d$ if, for every state $s$, it returns a distribution $p_F(\cdot\mid s, t_d)$ on
$\mathcal{O}_d$; its action is $\arg\max p_F$ and its confidence $m = \max p_F$.
\end{definition}
A gate maps $(m, \sigma)$, with $\sigma$ the slack left after System-1, to one of three actions: (i) \ACT{}, which executes
System-1's option; (ii) \ESC{}, which escalates the decision to a System-2 LLM; and (iii) \ABS{}, which applies the
class's safe default. The loss of the default is $\ell_0$, and $\beta := 1-\ell_0$ is the accuracy an action needs in
order to improve on it. A generative model can also be made type-safe, by masking at every decoding step the tokens that
do not continue some option; a type-safe framework instead scores the options directly and returns a law over
$\mathcal{O}_d$ without any decoding.

\section{Theoretical Framework}
\label{sec:theory}
The proposed framework organizes the characterization into eight axes, listed in Table~\ref{tab:axes}; the axes that carry
the main results are backed by a formal result and measured in Section~\ref{sec:results}, and the proofs are given in the
Appendix.

\begin{table}[t]\centering\footnotesize
\caption{Evaluation axes, the formal result behind each, and where it is measured.}\label{tab:axes}
\begin{tabular}{@{}p{0.23\columnwidth}p{0.36\columnwidth}p{0.3\columnwidth}@{}}
\toprule
Axis & Formal object & Measured by \\ \midrule
A0 type safety & output $\in\mathcal{O}_d$ (Def.~\ref{def:ts}) & typed interface \\
A1 decision quality & proper-score decomposition & known-type accuracy \\
A2 certified action & per-type CRC gate (Thm.~\ref{thm:crc}) & Table~\ref{tab:gate} \\
A3 label floor & Lemma~\ref{lem:C} & label curves (Table~\ref{tab:labels}) \\
A4 type conformance & Prop.~\ref{prop:ts} & Fig.~\ref{fig:ts} \\
A5 timeliness & Thm.~\ref{thm:gate}, Lem.~\ref{lem:floor}, Prop.~\ref{prop:coloc} & Figs.~\ref{fig:time}, \ref{fig:r1} \\
A6 cardinality & option budget $K_F^{\max}$ & handover $k\le64$ (Fig.~\ref{fig:map}) \\
A7 rewording & robustness vs.\ insensitivity & rewordings; counterfactual questions \\
A8 cost, sovereignty & J/decision; data leaves the domain & L4 energy; hosting \\ \bottomrule
\end{tabular}
\end{table}

\subsection{The optimal gate and the near-RT collapse (A5)}
Let System-1 be calibrated within the class, $\Prob(\text{correct}\mid m)=m$. Relative to the default, $V_{\ACT}(m)=m-\beta$,
$V_{\ABS}=0$ and $V_{\ESC}(m,\sigma)=\mathbb{E}[\mathbf{1}\{L_2\le\sigma\}(C_2-\beta)\mid m]-\mu\Prob(L_2>\sigma\mid m)-\kappa$,
where $C_2$ is System-2's correctness, $L_2$ its latency, $\mu\ge0$ the multiplier on deadline misses, $\kappa\ge0$ the
escalation cost, and $F_2(\sigma):=\Prob(L_2\le\sigma\mid m)$ the conditional latency distribution, which under the
independence assumed below does not depend on $m$; the argument $\sigma$ is omitted where clear. These values arise from
the following program: the gate minimizes the expected escalation cost $\mathbb{E}[c(a)]$ over randomized rules
$(m,\sigma)\mapsto a\in\{\ACT,\ESC,\ABS\}$, subject to a per-class risk constraint $\mathbb{E}[\ell\mid d]\le\alpha_d$ on
the loss $\ell\in[0,1]$ of the executed option and a per-class deadline-miss constraint $\Prob(\text{miss}\mid d)\le\varepsilon_d$.
With multipliers $\lambda,\mu\ge0$ on the two constraints, dividing the Lagrangian by $\lambda$ expresses every term in
loss units, so that $\kappa=c_2/\lambda$ is the escalation cost and $\mu$ the rescaled price of a missed deadline.
\begin{theorem}[Threshold gate]\label{thm:gate}
For fixed multipliers, the optimal gate picks $\arg\max\{V_{\ACT},V_{\ESC},V_{\ABS}\}$ pointwise. If $L_2$ is independent of
$(C_2,m)$ and $\pi_2(m):=\Prob(C_2=1\mid m)\le m$ for all $m$, the \ACT{} region is exactly $\{m\ge\beta\}$ for every slack.
\end{theorem}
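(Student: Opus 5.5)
The plan has two parts. First I show that, with the multipliers fixed, the gate decouples into independent choices for each value of $(m,\sigma)$. Then I compare the three action values under the stated independence and dominance assumptions.

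\emph{Pointwise optimality.} I write the Lagrangian of the constrained program as
\[
\mathbb{E}[c(a)] + \lambda\bigl(\mathbb{E}[\ell\mid d]-\alpha_d\bigr) + \lambda\mu\bigl(\Prob(\text{miss}\mid d)-\varepsilon_d\bigr)
\]
and divide by $\lambda$. Up to constants independent of the rule, the objective becomes $\mathbb{E}[\mathbb{E}[\text{cost}(a)\mid m,\sigma]]$, where the cost of each action is measured in loss units.
\begin{itemize}
\item \ABS{} incurs $\ell_0$.
\item \ACT{} incurs expected loss $1-m$, using calibration and taking $\ell$ to be the 0--1 loss.
\item \ESC{} incurs $\kappa$, plus $1-C_2$ if System-2 answers in time. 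If it misses the deadline, the default is applied and $\mu$ is added.
\end{itemize}
Subtracting the \ABS{} cost $\ell_0 = 1-\beta$ from each action yields exactly $-V_a$ as defined above. On the miss branch the executed loss is $\ell_0$, so it contributes nothing relative to \ABS{} beyond $-\mu$.

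A randomized rule is a conditional law $q(a\mid m,\sigma)$. The objective is linear in $q$ and additive over the conditioning values, so it is minimized by placing all mass on $\arg\max_a V_a(m,\sigma)$ separately at each point. Ties can be broken arbitrarily, and measurability follows because the $V_a$ are measurable.

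\emph{The \ACT{} region.} Under independence of $L_2$ from $(C_2,m)$, the escalation value factorizes as
\[
V_{\ESC}(m,\sigma)=F_2(\sigma)\bigl(\pi_2(m)-\beta\bigr)-\mu\bigl(1-F_2(\sigma)\bigr)-\kappa .
\]
I now compare the three values in two cases.

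If $m\ge\beta$, then $V_{\ACT}=m-\beta\ge 0=V_{\ABS}$. For escalation, $\pi_2\le m$ gives $V_{\ESC}\le F_2(m-\beta)\le m-\beta$, because $F_2\in[0,1]$, $m-\beta\ge0$, and $\mu,\kappa\ge0$. So \ACT{} attains the maximum.

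If $m<\beta$, then $V_{\ACT}<0=V_{\ABS}$, so \ACT{} is strictly beaten and is never selected. Hence the \ACT{} region is exactly $\{m\ge\beta\}$, for every $\sigma$.

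\emph{Main obstacle.} The delicate step is the first one: justifying that the constrained program reduces to the stated $V$'s. This requires modelling the loss on a deadline miss as the default loss $\ell_0$, and taking $\ell$ to be 0--1 so that calibration yields $\mathbb{E}[\ell\mid \ACT,m]=1-m$. I would state both conventions explicitly.

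I also need a tie convention at equality. At $m=\beta$ with $F_2=1$, $\mu=\kappa=0$ and $\pi_2=m$, all three values coincide. To get "exactly" in the statement, I adopt the convention that ties are resolved in favour of \ACT{}, the cheapest action since it incurs no escalation cost.
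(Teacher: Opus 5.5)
Your proof is correct and essentially the paper's. Like the paper, you maximize the $\lambda$-rescaled Lagrangian pointwise, factorize $V_{\ESC}=F_2(\pi_2-\beta)-\mu(1-F_2)-\kappa$ under independence, and bound $V_{\ESC}\le F_2(m-\beta)\le m-\beta$ on $\{m\ge\beta\}$, which is the paper's $V_{\ACT}-V_{\ESC}\ge(1-F_2)(m-\beta+\mu)+\kappa\ge0$ rearranged. The paper also invokes LP strong duality under Slater's condition to tie the fixed-multiplier gate back to the constrained program, whereas you instead make explicit the conventions it leaves implicit: the 0--1 loss, the default applied on a miss, and ties broken toward \ACT{}. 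Note that the \ACT{}/\ABS{} tie at $m=\beta$ occurs for all parameter values, not only in your degenerate example, so that convention is always needed at the boundary.
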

\begin{corollary}[Collapse to act-or-abstain]\label{cor:collapse}
The \ESC{} region is empty at slack $\sigma$ if $F_2(\sigma)=0$, i.e., System-2 cannot answer in time, or if $L_2$ is
independent of $(C_2,m)$ and $\pi_2(m)\le\max(m,\beta)$ for all $m$. The optimal gate is then Chow's rule with the safe default~\cite{chow}: \ACT{} iff $m\ge\tau$,
else \ABS.
\end{corollary}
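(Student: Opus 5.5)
We take Theorem~\ref{thm:gate} as given, so the gate is the pointwise $\arg\max$ of $V_{\ACT},V_{\ESC},V_{\ABS}$. The plan is to show that, under either hypothesis, $V_{\ESC}(m,\sigma)\le\max\{V_{\ACT}(m),V_{\ABS}\}$ for every $m$. Then \ESC{} never strictly wins. With ties broken away from \ESC{}, its region is empty, and the gate reduces to comparing $m-\beta$ with $0$.

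\emph{Case $F_2(\sigma)=0$.} In this case $\mathbf{1}\{L_2\le\sigma\}=0$ almost surely, so the first term of $V_{\ESC}$ vanishes. What remains is
\[
V_{\ESC}=-\mu\Prob(L_2>\sigma\mid m)-\kappa=-\mu-\kappa\le 0=V_{\ABS},
\]
because $\mu,\kappa\ge0$. This holds for every $m$, with no assumption on $C_2$.

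\emph{Case of independence with $\pi_2(m)\le\max(m,\beta)$.} Because $L_2$ is independent of $(C_2,m)$, the first term factorizes as $F_2(\sigma)(\pi_2(m)-\beta)$. Hence
\[
V_{\ESC}=F_2(\sigma)\bigl(\pi_2(m)-\beta\bigr)-\mu\bigl(1-F_2(\sigma)\bigr)-\kappa\le F_2(\sigma)\bigl(\pi_2(m)-\beta\bigr).
\]
Suppose first that $m\ge\beta$. Then $\pi_2(m)-\beta\le m-\beta$, and $F_2\in[0,1]$ together with $m-\beta\ge0$ gives $V_{\ESC}\le m-\beta=V_{\ACT}$. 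Suppose instead that $m<\beta$. Then $\pi_2(m)\le\beta$, so $V_{\ESC}\le0=V_{\ABS}$. The step that needs care is the use of $F_2\le1$: it is valid only because the multiplying factor is nonnegative, and this is exactly why the hypothesis is stated with $\max(m,\beta)$ and not just $m$. This hypothesis is weaker than the one in Theorem~\ref{thm:gate}, so we cannot reuse that theorem's \ACT{}-region statement directly. The short computation above replaces it.

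\emph{Conclusion.} With \ESC{} excluded, the optimal action is \ACT{} exactly when $V_{\ACT}=m-\beta\ge0$, and \ABS{} otherwise. This is Chow's rule with $\tau=\beta$, where ties at $m=\beta$ go to \ACT{}, matching the region $\{m\ge\beta\}$. We would also note that the threshold is written as $\tau$ because, once the risk multiplier is absorbed, $\tau$ coincides with $\beta$ relative to the safe default. The main obstacle is handling ties on the boundary consistently; we would state explicitly the tie-breaking convention against \ESC{} when the inequalities hold with equality.
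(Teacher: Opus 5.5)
Your proof is correct and follows the paper's own argument. The case $F_2(\sigma)=0$ gives $V_{\ESC}=-\mu-\kappa\le0$, and under independence you drop $-\mu(1-F_2)-\kappa$ and bound $F_2(\pi_2-\beta)$ by $\max(V_{\ACT},V_{\ABS})$. Your split into $m\ge\beta$ and $m<\beta$ is just the paper's chain $F_2(\pi_2-\beta)\le\max(\pi_2-\beta,0)\le\max(m-\beta,0)$ written out, and your explicit tie-breaking against \ESC{} is a small refinement the paper leaves implicit. One minor quibble: the $\max$ in the hypothesis serves only to weaken it on $\{m<\beta\}$, whereas the nonnegativity used in the $F_2\le1$ step comes from the case assumption $m\ge\beta$.
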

It is worth noting that the corollary turns the escalation question into two measurable conditions, namely whether
System-2 can answer within the slack and whether it is more accurate than System-1 where System-1 is unsure.
Two remarks qualify the theorem in practice. First, the independence of $L_2$ from $(C_2,m)$ is an idealization: on the
presented platform the System-2 answers that arrive early are the short ones, and short answers are more often correct,
so that the value of escalating is not monotone in the slack unless a missed deadline is priced explicitly. The
multiplier $\mu$ therefore has to reflect how the operator counts a miss, and a replay with $\mu=0$ escalated
decisions whose answers then expired (Section~\ref{sec:results}). Second, the theorem is stated per decision class,
and the classes in which $\pi_2>m$ holds at all are an empirical property of the System-1/System-2 pair; the
characterization measures them rather than assuming them.

\subsection{Escalation feasibility and co-location (A5)}
\begin{lemma}[Service-time floor]\label{lem:floor}
A System-2 answer of $n$ tokens takes at least $\mathrm{TTFT}+(n-1)\cdot\mathrm{TPOT}$, with TTFT the time to the first
token and TPOT the per-token decode time at concurrency 1, which is the shortest the server attains since batching only
lengthens each decode step. Escalation is feasible within $D$ only for answers with $n\le1+(D-\mathrm{TTFT})/\mathrm{TPOT}$.
\end{lemma}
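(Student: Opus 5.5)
The plan is to model autoregressive decoding as a sequential process and bound the completion time from below step by step. An answer of $n$ tokens needs a prefill phase that ends with the first token, followed by $n-1$ decode steps. Each decode step depends on the token emitted by the previous one, so the steps cannot overlap. Let $T_1$ be the time at which the first token is emitted and $\Delta_j$ the duration of the $j$-th subsequent decode step. Then the completion time is exactly
\[
L_2 = T_1 + \sum_{j=2}^{n}\Delta_j .
\]
It therefore suffices to show $T_1\ge\mathrm{TTFT}$ and $\Delta_j\ge\mathrm{TPOT}$ for every $j$, where both reference quantities are measured at concurrency 1.

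The key step is the monotonicity claim: the per-step time at concurrency 1 is the minimum over all load conditions of the server. I would state it as a modelling assumption on the serving engine, in the batched-decoding setting of~\cite{vllm}. Each iteration of the engine processes the union of all active requests. The cost of an iteration is nondecreasing in the set of requests it contains, because the weights must be read at least once per step and adding sequences only adds attention and matrix work. A request in a batch of size $b\ge1$ therefore waits at least as long per step as it would alone. Queueing behind other requests before prefill only adds delay, so $T_1\ge\mathrm{TTFT}$. Summing over the $n-1$ decode steps then gives the floor $L_2\ge \mathrm{TTFT}+(n-1)\mathrm{TPOT}$.

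I expect this monotonicity to be the main obstacle, since it is a property of the hardware and scheduler rather than a mathematical fact. Chunked prefill or speculative decoding could in principle make an individual step faster. I would handle this by restricting the lemma explicitly to standard one-token-per-step decoding with a fixed quantization~\cite{awq}. Within that restriction I would argue that TTFT and TPOT are defined as the infimum latencies the deployment attains, so the inequality holds by definition and measurement.

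Feasibility then follows by rearrangement. If the answer arrives within $D$, the floor gives $\mathrm{TTFT}+(n-1)\mathrm{TPOT}\le L_2\le D$, hence $n\le 1+(D-\mathrm{TTFT})/\mathrm{TPOT}$ since $\mathrm{TPOT}>0$. Taking the contrapositive, longer answers necessarily miss the deadline. In particular, when $D<\mathrm{TTFT}$ no answer with $n\ge1$ is feasible, so $F_2(D)=0$ and Corollary~\ref{cor:collapse} applies.
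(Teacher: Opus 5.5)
Your proposal is correct and follows essentially the same route as the paper: you split the answer into the first token plus $n-1$ strictly sequential one-token decode steps, lower-bound each step by the concurrency-1 step time (the paper also takes ``batching only lengthens each decode step'' as given), sum the bounds, and rearrange to get the feasibility condition on $n$. Your explicit argument that queueing can only delay the first token ($T_1\ge\mathrm{TTFT}$), and your restriction to one-token-per-step decoding without speculative decoding, spell out what the paper's one-line proof leaves implicit.
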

\begin{proposition}[Co-location stability]\label{prop:coloc}
Let System-1 decisions arrive every $T$ and be served in order, taking $s_i$ while a co-located System-2 is idle and $s_b$
while it decodes ($s_i<T<s_b$), and let System-2 be busy a fraction $u_2$ of the time, alternating slowly relative to $s_b$.
System-1 is stable iff $u_2<u^*:=s_b(T-s_i)/\big(T(s_b-s_i)\big)$, and within a System-2 busy period of length $B$ its
lateness grows to $B(s_b-T)/T$.
\end{proposition}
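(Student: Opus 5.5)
The plan is a rate (throughput) comparison rather than a work-in-time drift. This is what produces the stated threshold $u^*$. Because System-2 alternates slowly relative to $s_b$, I treat the System-1 server as operating in two quasi-stationary regimes. In the idle regime it completes decisions at rate $1/s_i$; in the busy regime it completes them at rate $1/s_b$. The slow-alternation assumption means the few decisions straddling a regime switch contribute a vanishing fraction of the work. So the long-run service capacity is the time-average of the two rates, $\bar r(u_2)=u_2/s_b+(1-u_2)/s_i$, while decisions arrive deterministically at rate $1/T$.

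\textbf{Stability.} I would invoke the standard criterion for a single-server FIFO queue in a slowly modulated (Markov- or regenerative-modulated) environment: the queue is stable iff the arrival rate is below the time-averaged service rate, $1/T<\bar r(u_2)$.
\begin{itemize}
\item \emph{Sufficiency.} Over a long horizon $H$ the server, if continuously backlogged, completes about $H\bar r(u_2)$ decisions against $H/T$ arrivals. The backlog is therefore drained whenever the inequality holds.
\item \emph{Necessity.} If $1/T>\bar r(u_2)$, the backlog grows linearly by the same count.
\end{itemize}
Solving $1/T<u_2/s_b+(1-u_2)/s_i$ for $u_2$ gives $u_2(1/s_i-1/s_b)<1/s_i-1/T$. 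Since $s_b>s_i$, this is
\[
u_2<\frac{(T-s_i)/(s_iT)}{(s_b-s_i)/(s_is_b)}=\frac{s_b(T-s_i)}{T(s_b-s_i)}=u^*.
\]
The assumptions $s_i<T<s_b$ guarantee $0<u^*<1$, so the condition is non-trivial. For the edge case $u_2=u^*$ I would note that the queue is null-recurrent, i.e.\ not stable, which justifies the strict inequality.

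\textbf{Lateness in a busy period.} Start a System-2 busy period of length $B$ with System-1 keeping up, which holds in the stable regime. During the period, $B/T$ decisions arrive and at most $B/s_b$ are served. The backlog therefore grows by $B(1/T-1/s_b)$ decisions, and each of these costs $s_b$ to clear at the busy rate. The last arrival of the period waits behind this backlog, so its lateness is
\[
s_b\,B\left(\frac{1}{T}-\frac{1}{s_b}\right)=\frac{B(s_b-T)}{T}.
\]
Because FIFO makes lateness monotone along the period, this is also its maximum.

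\textbf{Main obstacle.} The step I expect to be hardest is justifying rigorously that the capacity equals the rate average, i.e.\ counting completions per unit time rather than weighting busy fractions of work. This is exactly where a drift argument yields a different threshold. I would handle it by conditioning on a regenerative cycle of System-2 and applying the renewal-reward theorem to the number of System-1 completions per cycle under a continuously backlogged server. Slow alternation would then bound the boundary effects by $O(s_b)$ per cycle against cycle lengths $\gg s_b$.
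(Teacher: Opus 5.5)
Your proposal is correct and follows the paper's proof. The paper likewise takes the long-run capacity to be the time-weighted service rate $u_2/s_b+(1-u_2)/s_i$ under slow alternation, cites the load condition $1/T<u_2/s_b+(1-u_2)/s_i$ for a single server with modulated service (Loynes), rearranges it to $u^*$, and gets the busy-period lateness from a growth of $s_b-T$ per arrival over $B/T$ arrivals, which is your backlog count $s_bB(1/T-1/s_b)$ in other bookkeeping. Your renewal-reward justification goes beyond what the paper writes; the one overstatement is the null-recurrence remark at $u_2=u^*$, which holds only for random modulation, since under deterministic alternation the backlog drains exactly each cycle and stays bounded there.
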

The proposition makes explicit a cost that a per-request latency figure hides: a System-1 that meets its deadline in
isolation can still fall behind its period whenever it shares the GPU with a decoding System-2.

\subsection{Certified action per type (A2, A3, A4)}
Let $Z=(m,\mathbf{1}\{\text{wrong}\})$ and calibrate on $n$ labelled decisions of type $t$, restricting the threshold to the
candidate set $T_n=\{m_1,\dots,m_n,\infty\}$:
$\hat\tau=\min\{\tau\in T_n:(\sum_i\mathbf{1}\{m_i\ge\tau,\text{wrong}_i\}+1)/(n+1)\le\alpha\}$, with $\hat\tau=\infty$ if no
finite threshold qualifies.
\begin{theorem}[Per-type gate~\cite{crc}]\label{thm:crc}
If calibration and test decisions of type $t$ are exchangeable, $\Prob_t(\ACT\wedge\text{wrong})\le\alpha$. For i.i.d.\
calibration from $P$ and an independent test decision drawn from $Q$, the risk is at most $\alpha+d_{\mathrm{TV}}(P_Z,Q_Z)$.
\end{theorem}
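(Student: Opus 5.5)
The plan is to follow the standard conformal risk control argument of~\cite{crc}, specialized to the bounded, monotone loss $L_i(\tau)=\mathbf{1}\{m_i\ge\tau,\text{wrong}_i\}\in\{0,1\}$. This loss is non-increasing in $\tau$ and equals $0$ at $\tau=\infty$, so $\hat\tau$ is always well defined.

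\textbf{Step 1: the oracle threshold.} Fix a type $t$ and the $n+1$ exchangeable points $Z_1,\dots,Z_{n+1}$, where $Z_{n+1}$ is the test decision. Define the oracle threshold
\[
\tau^*=\min\Big\{\tau\in T_{n+1}:\tfrac{1}{n+1}\textstyle\sum_{i=1}^{n+1}L_i(\tau)\le\alpha\Big\},
\]
with $T_{n+1}=\{m_1,\dots,m_{n+1},\infty\}$. This is a symmetric function of the multiset $\{Z_1,\dots,Z_{n+1}\}$.

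\textbf{Step 2: $\hat\tau\ge\tau^*$.} Since $L_{n+1}(\tau)\le1$, we have
\[
\tfrac{1}{n+1}\textstyle\sum_{i\le n+1}L_i(\tau)\le\big(\sum_{i\le n}L_i(\tau)+1\big)/(n+1).
\]
Hence any $\tau$ that qualifies for $\hat\tau$ also qualifies for $\tau^*$. One subtlety is that $\hat\tau$ ranges over $T_n$ and $\tau^*$ over $T_{n+1}\supseteq T_n$. Minimizing over the larger set only lowers $\tau^*$, so $\tau^*\le\hat\tau$ still holds. Monotonicity of $L_{n+1}$ then gives $L_{n+1}(\hat\tau)\le L_{n+1}(\tau^*)$.

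\textbf{Step 3: exchangeability.} Conditional on the multiset, $\tau^*$ is fixed and the test index is uniform over the $n+1$ positions. Therefore
\[
\mathbb{E}[L_{n+1}(\tau^*)]=\mathbb{E}\Big[\tfrac{1}{n+1}\textstyle\sum_i L_i(\tau^*)\Big]\le\alpha.
\]
Since $\Prob_t(\ACT\wedge\text{wrong})=\mathbb{E}[L_{n+1}(\hat\tau)]$, the first claim follows. Ties among the $m_i$ are harmless, because the argument never breaks ties between points.

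\textbf{Step 4: the shift bound.} Let the calibration sample be i.i.d.\ from $P$ and the test point drawn independently from $Q$. Condition on the calibration sample, so that $\hat\tau$ is fixed and the test risk is $g(Z_{n+1})=L(\hat\tau;Z_{n+1})\in[0,1]$. Then
\[
\mathbb{E}_Q[g]\le\mathbb{E}_P[g]+d_{\mathrm{TV}}(P_Z,Q_Z).
\]
Taking the expectation over calibration, $\mathbb{E}_P[g]$ averaged over the calibration sample is exactly the risk when the test point is drawn from $P$. That scenario is exchangeable, so it is at most $\alpha$ by Steps 1--3.

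\textbf{Main obstacle.} I expect Step 2 to be the delicate one. The comparison requires the oracle and empirical candidate sets to be aligned, and the $+1$ correction must dominate the unseen test loss uniformly in $\tau$. Both hold here because the loss is $0/1$ and $\infty$ belongs to both candidate sets. If the alignment argument becomes awkward, I would first redefine both minima over the common grid $T_{n+1}$ and check that this does not change $\hat\tau$. It does not: between consecutive calibration scores the empirical risk is constant, so restricting to $T_n$ can only raise $\hat\tau$, which preserves $\hat\tau\ge\tau^*$.
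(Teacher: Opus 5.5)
Your proof is correct and follows the paper's own argument: the oracle threshold over $T_n\cup\{m_{n+1}\}$ (your $T_{n+1}$), feasibility of $\hat\tau$ for it because $L_{n+1}\le 1$, monotonicity to get $L_{n+1}(\hat\tau)\le L_{n+1}(\tau^*)$, exchangeability of the symmetric oracle, and then conditioning on the calibration set so that the $[0,1]$-valued loss changes by at most $d_{\mathrm{TV}}(P_Z,Q_Z)$ between $P$ and $Q$. The fallback in your last paragraph is unnecessary, since Step~2 already aligns the candidate sets; its claim that re-gridding leaves $\hat\tau$ unchanged is also not literally true, because adding $m_{n+1}$ to the grid can lower the minimiser within a constant-risk interval, and only the inequality you state after it holds.
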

\begin{lemma}[Certification label floor]\label{lem:C}
The gate can act ($\hat\tau<\infty$) only if $n\ge\lceil 1/\alpha\rceil-1$, i.e., 19 labels at $\alpha=0.05$
and 9 at $\alpha=0.10$, per decision type.
\end{lemma}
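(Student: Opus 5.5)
The plan is to argue directly from the definition of $\hat\tau$. The gate acts only if some finite $\tau\in T_n$ satisfies the threshold condition, so it suffices to find the smallest $n$ for which any finite candidate can qualify. The key observation is that the empirical count in the numerator is nonnegative. Hence, for every finite $\tau$,
\[
\frac{\sum_i\mathbf{1}\{m_i\ge\tau,\text{wrong}_i\}+1}{n+1}\;\ge\;\frac{1}{n+1}.
\]
A finite threshold can therefore qualify only if $1/(n+1)\le\alpha$, which gives $n\ge 1/\alpha-1$. Since $n$ is an integer, this becomes $n\ge\lceil 1/\alpha-1\rceil=\lceil 1/\alpha\rceil-1$, where the last equality holds because subtracting the integer $1$ commutes with the ceiling.

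Next I will check the numerical instances. At $\alpha=0.05$ we have $\lceil 20\rceil-1=19$, and at $\alpha=0.10$ we have $\lceil 10\rceil-1=9$. The phrase "per decision type" follows because Theorem~\ref{thm:crc} calibrates separately on the $n$ labelled decisions of each type $t$. The bound therefore applies to each type's own sample count, not to the pooled labels.

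For completeness, I will also show that the floor is attained. This is not strictly required by the "only if" statement, but it confirms the bound is tight. Take $\tau=\max_i m_i$ in the case where the highest-confidence calibration decision is correct, or more simply any $\tau$ above all wrong $m_i$ but at most some $m_j$. In that case the count is $0$, and the ratio equals $1/(n+1)\le\alpha$ exactly when $n\ge\lceil1/\alpha\rceil-1$. So whenever such a configuration exists, the bound is achieved.

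The only delicate point is a subtlety of the candidate set: $\tau$ ranges over $\{m_1,\dots,m_n\}$, and $m_j\ge\tau$ includes the point itself. The attainment remark therefore needs a correct calibration decision to serve as the threshold. The necessity direction, however, uses only the trivial lower bound and does not depend on this detail, so I expect no real obstacle there.
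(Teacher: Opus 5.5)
Your proposal is correct and follows the paper's proof: the paper notes that the criterion's left side equals $1/(n+1)$ at $\tau=\infty$ and so is at least that at any finite $\tau$, forcing $1/(n+1)\le\alpha$, which is your nonnegativity bound on the count. You additionally make the integer step $n\ge\lceil 1/\alpha\rceil-1$ explicit and add an optional attainment remark that the lemma does not require.
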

Zero-shot frameworks therefore pay only this floor for a new type, while specialists pay their training labels on top:
zero-shot is not zero-label for certified action.
The floor is a necessary condition only: the \ACT{} rate that the gate reaches at a given $\alpha$ grows with $n$,
since the finite-sample term $1/(n+1)$ forces a conservative threshold on small calibration sets, and a framework whose
confidence separates correct from wrong decisions poorly may still act on almost nothing after calibration. The
label cost of a new decision type is thus the sum of three terms, namely the training labels of a specialist, the
calibration floor of Lemma~\ref{lem:C}, and the additional calibration labels that bring the \ACT{} rate above the
coverage floor $c_d$.
\begin{proposition}[Type mismatch]\label{prop:ts}
If the gate of type $t$ answers a query of type $t'\neq t$, then
$\Prob_{t'}(\ACT\wedge\text{wrong})\le\alpha_t+d_{\mathrm{TV}}(P^t_Z,P^{t'}_Z)$, and no better bound holds in general: a
System-1 that ignores the question has the same law of $m$ under $t$ and $t'$ while its correctness can flip, so the
risk approaches the \ACT{} rate.
\end{proposition}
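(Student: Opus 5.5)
The upper bound reduces to the distribution-shift clause of Theorem~\ref{thm:crc}, with calibration drawn from $P=P^t$ and the test query drawn from $Q=P^{t'}$. I would make this explicit rather than only cite it.

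Condition on the calibration sample, so that $\hat\tau$ is fixed. The event $\{\ACT\wedge\text{wrong}\}$ is then the set $\{(m,w): m\ge\hat\tau,\ w=1\}$, a measurable function of $Z=(m,\mathbf{1}\{\text{wrong}\})$. For any event $A$ in $Z$-space, $|P^{t'}_Z(A)-P^t_Z(A)|\le d_{\mathrm{TV}}(P^t_Z,P^{t'}_Z)$ by the definition of total variation. Averaging over the calibration sample gives
\[
\Prob_{t'}(\ACT\wedge\text{wrong})\le\Prob_{t}(\ACT\wedge\text{wrong})+d_{\mathrm{TV}}(P^t_Z,P^{t'}_Z).
\]
The exchangeable case of Theorem~\ref{thm:crc} bounds the first term on the right by $\alpha_t$.

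One point needs care. The test query of type $t'$ must be independent of the type-$t$ calibration set, which holds because the two are drawn from different type populations. Under that independence, the conditioning step is legitimate.

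For tightness I would build an explicit question-blind System-1. Let the state $s$ be shared and let the two types ask complementary questions, for instance binary options with the correct answer under $t'$ being the negation of the answer under $t$. Since $p_F(\cdot\mid s,t)=p_F(\cdot\mid s,t')$, the action and $m$ coincide under both types, so the law of $m$ is identical. Correctness flips, so $\mathbf{1}\{\text{wrong}\}_{t'}=1-\mathbf{1}\{\text{wrong}\}_t$. The gate learned on $t$ acts on exactly the same states under $t'$, and
\[
\Prob_{t'}(\ACT\wedge\text{wrong})=\Prob(m\ge\hat\tau)-\Prob_t(\ACT\wedge\text{wrong})\ge\Prob(\ACT)-\alpha_t.
\]
If System-1 is accurate on $t$, the second term is near $0$, and the risk approaches the \ACT{} rate.

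To show that no better bound holds in general, I would compute $d_{\mathrm{TV}}$ in this example. The $m$-marginals agree, so the distance is carried by the wrong-bit jointly with $m$. Its value is at least the mass on which the wrong-bit flips within the region $\{m\ge\hat\tau\}$, and it can be made close to the \ACT{} rate. Sharpness of the additive form then follows from two facts: the realized risk is roughly $\alpha_t+$ (a quantity comparable to $d_{\mathrm{TV}}$), and the \ACT{} rate can approach $1$.

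The main obstacle is making this sharpness statement precise. The naive $d_{\mathrm{TV}}$ in the flip construction counts disagreement outside the \ACT{} region as well, so it may exceed the realized excess risk. I would first try a construction in which System-1 is near-certain everywhere, with $m\approx1$ and nearly all mass acted upon. Then $d_{\mathrm{TV}}$ is about the flipped mass, the \ACT{} rate is about $1$, and the bound is attained up to $\alpha_t$. This matches the statement's phrasing that the risk ``approaches the \ACT{} rate''.
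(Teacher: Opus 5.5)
Your upper bound matches the paper's. The paper simply invokes the shift clause of Theorem~\ref{thm:crc} with $Q=P^{t'}$, and the proof of that clause is exactly your argument: condition on the calibration set, then apply total variation to the event $\{m\ge\hat\tau,\ \text{wrong}\}$. One correction: independence of the type-$t'$ query from the type-$t$ calibration set is an assumption inherited from that theorem. It does not follow from the two queries belonging to different types.

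The gap is in ``no better bound holds in general''. Your flip construction does show that the risk can approach the \ACT{} rate, but it does not show that $\alpha_t+d_{\mathrm{TV}}$ cannot be lowered. There are three problems.
\begin{itemize}
\item As you note, flipping correctness on $\{m<\hat\tau\}$ inflates $d_{\mathrm{TV}}$.
\item The flip also turns the acted-and-wrong decisions of type $t$ into correct ones. This lowers the risk to $a-\Prob_t(\ACT\wedge\text{wrong})$, with $a$ the \ACT{} rate.
\item Your claim that $d_{\mathrm{TV}}$ is at least the flipped mass has the inequality backwards. A coupling bounds total variation from above; a single event bounds it from below.
\end{itemize}
The near-certain remedy does not close the gap. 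Take a calibrated System-1 with $m\equiv1-\delta$, $\delta\le\alpha_t$, acting everywhere. Your flip gives risk $1-\delta$ against $\alpha_t+d_{\mathrm{TV}}=\alpha_t+1-2\delta$. The slack $\alpha_t-\delta$ is therefore largest in the limit $m\to1$ that you choose, and there the bound exceeds $1$, so the comparison says nothing.

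What is missing is the paper's coupling:
\begin{itemize}
\item Fix the calibration set.
\item Take a type-$t$ gate with $\Prob_t(\ACT\wedge\text{wrong})=\alpha_t$ exactly.
\item Since a question-blind System-1 returns the same $m$ and option under both types, choose the $t'$ ground truth state by state so that every acted decision is wrong under $t'$.
\item Keep correctness unchanged on $\{m<\hat\tau\}$.
\end{itemize}
The two laws of $Z$ then differ only by the mass $a-\alpha_t$ moved from (acted, correct) to (acted, wrong) at the same $m$. This coupling gives $d_{\mathrm{TV}}\le a-\alpha_t$, and the event $\{m\ge\hat\tau,\ \text{wrong}\}$ gives $d_{\mathrm{TV}}\ge a-\alpha_t$. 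Hence $\Prob_{t'}(\ACT\wedge\text{wrong})=a=\alpha_t+d_{\mathrm{TV}}$ exactly, for every $a\in[\alpha_t,1]$.
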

For such a question-blind System-1 no test on $m$ can detect the mismatch; only a check of the declared type can. The
guarantee of Theorem~\ref{thm:crc} is thus a property of the (framework, type) pair, not of the framework.

\subsection{Cardinality (A6)}
Let $K_F^{\max}$ be the largest option count that $F$ scores in one call. A decision over $K>K_F^{\max}$ options can be
served by a round of groups of $g\le K_F^{\max}$ options followed by a final among the $\lceil K/g\rceil$ group winners, at
the price of more than one call per decision and of an accuracy that is the product of the round accuracies; the
analysis of this reduction is reported in an extended version of this work, and the present paper uses only its measured
accuracy and latency where $K_d>K_F^{\max}$.

\subsection{Applicability predicates}
\begin{proposition}[Applicability]\label{prop:A}
Under the exchangeability of Theorem~\ref{thm:crc}, $F$ serves class $d$ with $\Prob(\ACT\wedge\text{wrong})\le\alpha_d$,
$\Prob(\text{miss})\le\varepsilon_d$ and \ACT{} rate $\ge c_d$ if
(T) $q_{1-\varepsilon_d}(L_{\mathrm{net},d}+L_F(K_d))\le D_d$, where $L_F(K_d)$ is $F$'s per-decision latency at $K_d$ options
and $q_{1-\varepsilon}$ the $(1-\varepsilon)$-quantile, with the latency of the group-and-final reduction when $K_d>K_F^{\max}$ and
Proposition~\ref{prop:coloc}'s condition on a shared GPU;
(Y) $t_d$ is in $F$'s calibrated domain with at least $\lceil 1/\alpha_d\rceil-1$ labels;
(C) the gate's \ACT{} rate at $\alpha_d$ is at least $c_d$; and
(K) $K_d\le K_F^{\max}$, directly or through the group-and-final reduction.
(Y) is necessary in the worst case (Prop.~\ref{prop:ts}) and (T) by definition.
\end{proposition}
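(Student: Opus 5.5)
The plan is to verify the three guarantees one at a time, each from the predicate it rests on, and then dispose of the two necessity claims.

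\emph{Risk.} By (Y), $t_d$ lies in $F$'s calibrated domain and the per-type calibration set has at least $\lceil 1/\alpha_d\rceil-1$ labels. Lemma~\ref{lem:C} then allows $\hat\tau<\infty$, so the gate is not forced to abstain everywhere. Under the assumed exchangeability of calibration and test decisions of type $t_d$, Theorem~\ref{thm:crc} gives $\Prob(\ACT\wedge\text{wrong})\le\alpha_d$ directly.

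\emph{Coverage.} The \ACT{} rate $\ge c_d$ is exactly (C), since (C) states the rate that gate attains at $\alpha_d$. I will note that (C) refers to the same calibrated gate used in the risk step, so no new argument is needed.

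\emph{Deadline.} A decision misses only if its end-to-end latency $L_{\mathrm{net},d}+L_F(K_d)$ exceeds $D_d$. (T) places the $(1-\varepsilon_d)$-quantile of that latency at or below $D_d$, so the miss probability is at most $\varepsilon_d$ by the definition of the quantile.

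Two refinements are folded into (T) and (K):
\begin{itemize}
\item When $K_d>K_F^{\max}$, (K) guarantees service through the group-and-final reduction. Its latency is the one that enters $L_F(K_d)$ in (T), so the same quantile argument applies with that composite latency.
\item On a shared GPU, (T) includes the condition of Proposition~\ref{prop:coloc}. When $u_2<u^*$ the System-1 queue is stable, so the per-decision latency distribution used in the quantile is well defined and does not drift. Without that condition, lateness grows as $B(s_b-T)/T$ and no quantile bound could hold.
\end{itemize}

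\emph{Necessity.} (T) is necessary by definition: if the quantile exceeds $D_d$, then $\Prob(\text{miss})>\varepsilon_d$. For (Y), I will invoke Proposition~\ref{prop:ts}. If $t_d$ is outside the calibrated domain, the gate used is one calibrated on some $t\ne t_d$. For a question-blind System-1, the risk can approach the \ACT{} rate, which is at least $c_d$ and may exceed $\alpha_d$. If instead there are fewer than $\lceil1/\alpha_d\rceil-1$ labels, Lemma~\ref{lem:C} forces $\hat\tau=\infty$, so the \ACT{} rate is zero and (C) fails whenever $c_d>0$.

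The main obstacle is making the deadline step rigorous in the co-located case. Proposition~\ref{prop:coloc} gives stability and a lateness growth rate, not a latency distribution, so I will read (T) as being evaluated on the stationary latency measured under the stated condition, rather than deriving that quantile.

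A second subtlety is that risk, coverage and deadline are each guaranteed marginally and are not claimed jointly. I will state explicitly that the proposition asserts three separate marginal bounds, which is all the predicates deliver.
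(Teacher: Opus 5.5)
Your proposal is correct and is essentially the paper's proof. The paper uses the same composition: Theorem~\ref{thm:crc} for risk, (C) as the coverage condition itself, (T) with Proposition~\ref{prop:coloc} for time, Lemma~\ref{lem:C} for labels, and (K) with the reduction's latency entering (T). The paper also lists Lemma~\ref{lem:floor} under time, which you do not need, because (T) is stated on System-1 latency and the certified gate only acts or abstains. One small imprecision: the lateness growth $B(s_b-T)/T$ occurs within every System-2 busy period whether or not $u_2<u^*$, so what rules out a stationary quantile above $u^*$ is the positive long-run drift of the backlog, not that per-period growth.
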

The proposition is shallow by design: it turns the question of where $F$ can be applied into a predicate that a
measurement settles, which is the purpose of the characterization that follows.

\section{Evaluation Architecture and Methodology}
\label{sec:testbed}
The presented evaluation architecture is designed to provide a controlled and reproducible 5G environment, integrating
an open-source 5G Core, an emulated RAN, a typed decision service and a GPU-hosted System-2. The system is structured
into multiple layers in order to ensure that every framework is evaluated behind the same typed interface, on the same
decisions and under the same timing conditions, as depicted in Fig.~\ref{fig:testbed}. The primary ambition of the
architecture is to attribute every difference in the results to the framework, and not to the path around it. We implemented every layer with open-source components, except for the hosted framework, as detailed below.

\begin{figure}[t]\centering
\begin{tikzpicture}[font=\scriptsize, node distance=4mm and 7mm,
  box/.style={draw, rounded corners=1pt, align=center, minimum height=5.5mm, fill=white},
  arr/.style={-{Stealth[length=1.6mm]}, thick}]
\node[box] (ue) {UEs, gNB\\(UERANSIM)};
\node[box, right=of ue] (upf) {per-slice\\UPF};
\node[box, right=of upf] (smf) {per-slice\\SMF};
\node[box, right=of smf] (pcf) {PCF};
\node[box, below=9mm of smf] (nwdaf) {NWDAF};
\node[box, below=9mm of pcf, fill=blue!8] (agent) {loop\\agent};
\node[box, below=6mm of agent, fill=orange!15] (s1) {System-1 service:\\Jev $|$ Laya $|$ AnyJev\\(typed \texttt{/v1/judge})};
\node[box, left=of s1, fill=gray!12] (gate) {gate:\\act / abstain\\(escalate)};
\node[box, left=of gate, fill=gray!12] (s2) {System-2\\Qwen3-8B\\(vLLM)};
\draw[arr] (ue) -- (upf); \draw[arr] (smf) -- node[below, font=\tiny]{PFCP} (upf); \draw[arr] (pcf) -- node[below, font=\tiny]{N7} (smf);
\draw[arr] (nwdaf) -- node[below]{notify} (agent); \draw[arr] (agent) -- node[right]{N5} (pcf);
\draw[arr] (agent) -- node[right]{choice} (s1); \draw[arr] (s1) -- (gate); \draw[arr, dashed] (gate) -- (s2);
\node[draw, dashed, rounded corners, fit=(ue)(upf)(smf)(pcf)(nwdaf), inner sep=2mm, label={[font=\scriptsize]above:Open5GS core, Plane B (built)}] {};
\end{tikzpicture}
\caption{The presented evaluation architecture and the framework plug-in point. Each framework serves the same typed
endpoint, and the core-policy loop (Plane B) runs end to end; Plane A (E2/near-RT RIC) is not closed on this
platform.}\label{fig:testbed}
\end{figure}

\subsection{Architecture layers}
\textbf{Core network layer.} This layer is realized by Open5GS~2.8.0~\cite{open5gs} with a Session Management Function
(SMF) and UPF pair per slice, namely enhanced Mobile Broadband (eMBB), URLLC and massive Machine-Type Communications
(mMTC). User Equipment (UE) and the gNodeB are emulated by UERANSIM~\cite{ueransim} with two local patches, specifically
the PDU session resource modify procedure and uplink QoS-flow classification, so that a policy change reaches the UPF data
path in both directions (UERANSIM commit 48554b7 with patches 0001 and 0002). An srsRAN~\cite{srsran} gNodeB over ZeroMQ
is available on the platform, but no RAN-side timing result is reported in this paper. There is no real radio; RAN-side
timing is scoped by a real-time factor~\cite{atlasran} and lies outside the presented results. The decision host and the
core run on different hypervisor nodes, and since the platform's switch does not trunk the dedicated decision-path VLANs
between them, the loop agent's traffic to the System-1 endpoint and to the PCF traverses the shared 1\,GbE management VLAN. This affects only transport latency, not the signalling path: every policy is
installed through N5 at the PCF.

\textbf{Analytics and loop layer.} An Open5GS-targeted NWDAF~\cite{nwdaf-impl} observes logs and counters and notifies a
loop agent, which turns each notification into a typed three-slice choice, sends it to the decision layer, and installs
the chosen policy through N5 policy authorization at the Policy Control Function (PCF), from where it propagates to the SMF
over N7 and to the UPF over the Packet Forwarding Control Protocol (PFCP). No 3GPP conformance of the NWDAF is claimed.

\textbf{Typed decision layer.} This layer exposes one typed endpoint (\texttt{/v1/judge}) behind which each framework is
plugged in turn. Laya is used zero-shot (``base'') and after full fine-tuning on the v2 training split (5
seeds); a plain encoder, namely ModernBERT-base~\cite{modernbert} with one linear head per fixed-cardinality class, is trained on the same split (5 seeds) as the task-specific reference. AnyJev reads Qwen3-8B (bf16)~\cite{qwen3} zero-shot through
cyclic-order marginalization of the option-letter softmax with a label-free batch prior (L0); its L2 variant fits a
closed-form head per question layout on the training split, which the presented work pools per option count
($\dagger$, an extension to the shipped levels). Jev (jev-1.13.0) is queried zero-shot over the Internet.
Software versions: Laya 0.3.5 and AnyJev commit 3cd8c6f on PyTorch 2.6.0 (CUDA 12.4); Qwen3-8B (bf16) and Qwen3-8B-AWQ
on vLLM 0.30.0. Every gate in this paper operates on each framework's top-option probability $\max p_F$, not on the confidence
field that some frameworks report next to it.

\textbf{Gate and escalation layer.} The gate implements the conformal risk-control threshold of Theorem~\ref{thm:crc} on
each framework's top-option probability, fitted per decision class on the held-out calibration partition (Laya, and the
encoder on its direct classes) or, where a framework has no calibration run, by 2-fold cross-fitting on the test split
(Jev, AnyJev, and the encoder's cells at $k\ge16$), i.e., each fold's threshold is fitted on the other half of
the test items; every such choice is logged as a pre-registration deviation. Cross-fitting halves the effective
calibration size, so at $k\ge16$ the finite-sample term $1/(n+1)$ of Theorem~\ref{thm:crc} is about 0.015, i.e., 30\% of
$\alpha=0.05$, and the corresponding cells of Table~\ref{tab:gate} should be read with that granularity. The escalation target is Qwen3-8B served by vLLM~\cite{vllm} with 4-bit Activation-aware Weight Quantization
(AWQ)~\cite{awq}, co-located with System-1 on the edge GPU.

\textbf{Measurement layer.} We measured System-1 latency on an NVIDIA L4 (the edge GPU) with the Streaming Multiprocessor
(SM) clock locked at 1050\,MHz after a 20\,s settle, at batch 1 from raw per-call samples, reported as p50 and p99, and energy per decision is computed as mean power times
mean latency from 20\,Hz power samples; we ran training and accuracy runs on an NVIDIA RTX~4090 whose timings are never
reported. Jev's latency includes the wide-area network and is reported for cold (a new TLS session per call) and warm
connections, while its energy cannot be measured locally. The co-location and catch-up-batching runs
(Section~\ref{sec:results}) serve Laya's base checkpoint at $k=8$ and 20 decisions/s, since latency does not depend on the
weights, against Qwen3-8B-AWQ under vLLM on the same locked L4. The trace-driven gate replay uses fine-tuned Laya with
isotonic-calibrated confidence as System-1, System-2 correctness from the bf16 Qwen3-8B on the RTX~4090 and System-2
latency from the AWQ deployment on the L4, with $\pi_2$ and $F_2$ 2-fold cross-fitted on the test split; the AWQ
deployment is less accurate on handover than the bf16 model, so bf16 correctness biases the replay toward escalation, and
the set of escalated decisions was unchanged when AWQ correctness was substituted.

\subsection{Decisions, data and protocol}
The frozen v2 corpus comprises 3,645 test items over eight known decision classes, namely handover target selection with
$k\in\{4,\dots,64\}$ neighbours, SLA breach, urgency and anomaly severity, with labels generated from simulator rules.
Per class, the test partition holds 540 items for handover $k=4$ and $k=8$, urgency and anomaly severity,
135 items for each of handover $k=16$, $32$ and $64$, and 1,080 items for SLA breach; the
calibration and shift partitions have the same per-class sizes. Each item also carries a slice criticality, high (URLLC),
medium (eMBB) or low (mMTC), which assigns it a declared deadline of 100, 500 or 1,000\,ms in the gate replay, where the
class risk budget is scaled by 0.5, 1 and 2 respectively. The five seeds vary the training run of the fine-tuned arms and
share one calibration and test split.
Two derived corpora probe generality: (i) eight \emph{runtime-declared templates} (2,400 test items), which keep a v2
item's state and options and ask a different question; and (ii) the v2 test items under four \emph{rewordings}.
Each v2 item carries a network state, i.e., slice-level Key Performance Measurements (KPMs) such as the 95th-percentile
latency, the slice's SLA latency target and per-neighbour load and Reference Signal Received Power (RSRP), together with a
typed question of one of three judgment types: (i) \emph{choice}, which selects one of $k$ options, e.g., the handover
target among $k$ neighbour cells; (ii) \emph{yes/no}, which returns a probability, e.g., whether the slice's SLA is
breached; and (iii) \emph{score}, which selects an ordered level, e.g., the urgency of an intervention or the severity of
an anomaly. The corpus is split into training, calibration, test and shift partitions, and the v2 corpus has been frozen
since the first result. The runtime-declared templates ask, over the same states and options, among others for the most
loaded neighbour, for the least loaded neighbour subject to an RSRP constraint, for the neighbour with the strongest
signal, whether the jitter exceeds a fraction of the SLA, for a congestion level, for the root-cause KPI of an anomaly,
and for the known handover question under renamed option keys. Four of them change only the question while keeping the
state and options of a v2 item, which isolates whether a framework reads the question at all. The rewordings comprise
(i) the rule stated in words, (ii) the rule stated through the state's field names, and (iii) two paraphrases written
from a template rather than generated by any framework under evaluation. Shift
penalties use a sample-split (Scheff\'e-set) estimate of total variation on $Z$, reported next to a null estimate on two
halves of one sample. We pre-registered every endpoint before its run, with a non-inferiority margin of 3\,pts, paired
bootstrap over items (10,000 resamples) and 5 seeds wherever training is involved, and every departure from a
registration is logged. Byte-identical items reach every framework, and each cell states its label use. The applicability map evaluates
Proposition~\ref{prop:A} on the deadline grid $\{100\,\text{ms},500\,\text{ms},1\,\text{s},5\,\text{s},60\,\text{s}\}$,
which spans the near-RT, core-policy and non-RT timescales of Section~\ref{sec:related}, with $\varepsilon_d=0.01$ (so that
(T) uses the p99 latency), $\alpha_d=0.05$ and a coverage floor $c_d=0.5$, i.e., a framework counts as applicable only
if its certified gate acts on at least half of the decisions; these parameters were fixed in the pre-registration, and the
map is reported for a dedicated GPU, with the shared-GPU condition of Proposition~\ref{prop:coloc} treated separately in
Section~\ref{sec:results}.
Table~\ref{tab:fw} summarizes the design points with the measurements of Section~\ref{sec:results}.

\begin{table}[t]\centering\scriptsize
\caption{The design points, with measured values: known-type accuracy (mean over handover $k{=}4,8$, SLA, urgency,
anomaly; test), question-reading (mean accuracy on the four counterfactual templates), p99 latency and energy of one
$k{=}4$ decision on the L4 (Jev: end to end, warm), and labels needed before certified action on a new type.}\label{tab:fw}
\setlength{\tabcolsep}{2.2pt}
\resizebox{\columnwidth}{!}{%
\begin{tabular}{@{}lccccc@{}}\toprule
 & known acc. & question-reading & p99 (ms) & J/dec. & labels for a new type \\\midrule
Laya FT & 0.90 & 0.41 & 25.8 & 1.10 & training + 19 \\
Laya base & 0.27 & 0.27 & 25.8 & 1.10 & 19 (gate rarely acts) \\
Encoder & 0.92 & no readout & 23.5 & 0.55 & training (new head) + 19 \\
AnyJev-8B L0 & 0.44 & 0.81 & 283 & 19.6 & 19 \\
AnyJev-8B L2$\dagger$ & 0.73 & no readout & 94 & 6.6 & head per layout + 19 \\
Jev & 0.67 & 0.99 & 596 & n/m & 19; data leaves the domain \\\bottomrule
\end{tabular}}
\end{table}

\section{Characterization Results}
\label{sec:results}

\subsection{A0/A1: type safety and known-type quality}
Each framework returns a distribution over the declared options by construction, so format and parsing failures do not
arise at the typed interface, and every framework is compared on the same options. On known types, the fine-tuned
specialists reach specialist accuracy (Table~\ref{tab:fw}: encoder 0.92, Laya FT 0.90), while
the zero-shot frameworks remain substantially lower on the rule-defined classes (Jev 0.67, AnyJev L0
0.44, Laya base 0.27); the exception is SLA breach, which Jev answers at
0.963.

\subsection{A2/A3: certified action and its label cost}
\begin{table}[t]\centering\scriptsize
\caption{Certified gate per framework and known class ($\alpha=0.05$, test): $\Prob(\ACT\wedge\text{wrong})$ and \ACT{} rate.
Laya and the encoder's direct classes: 5-seed mean, calibrated on the held-out calibration partition; Jev, AnyJev and
the encoder's cells at $k\ge16$: 2-fold cross-fitting on the test items; $n=540$ per class
(135 at $k\ge16$, 1,080 for SLA breach). Bold: above $\alpha$, all within finite-sample noise
(one item is 0.001--0.007).}\label{tab:gate}
\setlength{\tabcolsep}{2.4pt}
\resizebox{\columnwidth}{!}{\begin{tabular}{lrrrrrrrr}
\toprule
 & \multicolumn{2}{c}{Laya FT} & \multicolumn{2}{c}{Encoder} & \multicolumn{2}{c}{AnyJev-8B L0} & \multicolumn{2}{c}{Jev} \\
\cmidrule(lr){2-3} \cmidrule(lr){4-5} \cmidrule(lr){6-7} \cmidrule(lr){8-9}
class & risk & ACT & risk & ACT & risk & ACT & risk & ACT \\
\midrule
handover $k{=}4$ & 0.043 & 0.91 & 0.039 & 0.99 & 0.042 & 0.29 & 0.032 & 0.35 \\
handover $k{=}8$ & 0.042 & 0.69 & 0.046 & 0.96 & 0.046 & 0.12 & 0.044 & 0.23 \\
handover $k{=}16$ & 0.024 & 0.76 & 0.035 & 0.86 & 0.022 & 0.09 & \textbf{0.070} & 0.34 \\
handover $k{=}32$ & 0.047 & 0.63 & \textbf{0.052} & 0.74 & 0.037 & 0.09 & 0.038 & 0.24 \\
handover $k{=}64$ & 0.040 & 0.43 & 0.042 & 0.70 & 0.037 & 0.07 & 0.037 & 0.22 \\
SLA breached (yes/no) & \textbf{0.052} & 0.99 & \textbf{0.057} & 0.99 & 0.044 & 0.05 & 0.037 & 0.78 \\
urgency (score) & \textbf{0.056} & 0.79 & \textbf{0.060} & 0.85 & 0.044 & 0.08 & 0.026 & 0.21 \\
anomaly (score) & \textbf{0.051} & 0.81 & 0.043 & 0.84 & 0.043 & 0.06 & 0.042 & 0.08 \\
\bottomrule
\end{tabular}
}
\end{table}
Table~\ref{tab:gate} shows that the risk promise holds for every framework in distribution; what differs is how often a
framework can act at that risk, which is the practical utility of the gate. The specialists act on 43--99\% of
decisions, zero-shot AnyJev on at most 29\% and Jev on 8--78\%. Under the shift split, AnyJev L0's SLA gate is the one cell that leaves its budget: its joint risk rises to
0.199, about four times $\alpha$, while its \ACT{} rate rises from 0.05 to 0.26; the
sample-split shift on that framework's own $Z$ is 0.35, so the cell stays within the bound
$\alpha+d_{\mathrm{TV}}$ of Theorem~\ref{thm:crc}, but the bound is loose there, and the cell shows that a zero-label
confidence can be shift-sensitive even where the state distribution barely moves. It should be noted that a usable shift penalty requires a
debiased estimate: the plug-in total variation on $Z$ read 0.09--0.16 at high $k$, whereas the
sample-split estimate read 0.007--0.034.

\begin{table}[t]\centering\footnotesize
\caption{Labels to certified action for a new family, held out of both arms' training (leave-one-family-out, LOFO): labels
$n$ at which the arm first comes within 3\,pts of its own full-data accuracy (5 seeds), plus the Lemma~\ref{lem:C}
floor.}\label{tab:labels}
\setlength{\tabcolsep}{4pt}
\begin{tabular}{@{}lrrr@{}}\toprule
family & Laya (LOFO+FT) & encoder & ratio \\\midrule
SLA breached & 100 & 300 & 3.00 \\
urgency & 300 & 1000 & 3.33 \\
handover & 1000 & $>$1000 & $>$ 1.00 \\
anomaly & 300 & 100 & 0.33 \\\midrule
\multicolumn{4}{@{}l}{zero-shot (Jev, AnyJev L0): 19 labels/type at $\alpha=0.05$}\\\bottomrule
\end{tabular}
\end{table}
Table~\ref{tab:labels} presents the label axis. Both arms follow the same leave-one-family-out (LOFO) protocol: each is
first trained on the other three v2 families, and then continued on $n$ labels of the held-out family with the whole
model updated, under the same small-budget schedule; the arms differ in their backbone, ModernBERT-large (421M) for Laya
against ModernBERT-base for the encoder, which is a confound that this comparison does not separate. The pre-registered
claim that the typed encoder needs three times fewer labels than a plain encoder held in 2 of 4 families: Laya
is far more label-efficient on handover at 100--300 labels (0.726/0.828 against
0.225/0.321), but reaches its own parity only at 1,000, and on anomaly severity
the plain encoder converges faster. At zero
labels the held-out family is not served (e.g., handover 0.141). A \emph{new question type}, however, is
learnable: with 100 labels per runtime-declared template, continued full fine-tuning lifts Laya to
0.78--0.97 on all eight templates (``most loaded'' 0.009 $\rightarrow$
0.809), whereas 10 labels do not suffice (0.14--0.76) and neither does
training the head alone (0.123 on ``most loaded'' at 100). It can be deduced that the specialist must
update its encoder, and not only its head, in order to read a new question.

\subsection{A4: type conformance}
\begin{figure*}[t]\centering\includegraphics[width=\textwidth]{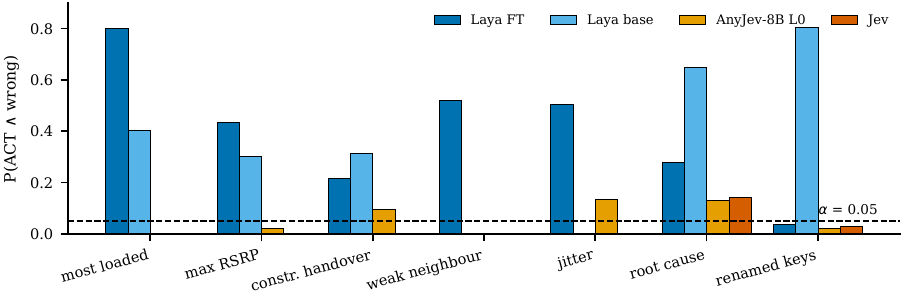}
\caption{Type mismatch: $\Prob(\ACT\wedge\text{wrong})$ when each framework's known-class gate ($\alpha=0.05$) answers the
seven gated runtime-declared templates (``max RSRP'' is the strongest-signal template and ``constr.\ handover'' the
RSRP-constrained one of Table~\ref{tab:g3}). Dashed: the $\alpha$ each gate promises on its own class.}\label{fig:ts}
\end{figure*}
\begin{table}[t]\centering\scriptsize
\caption{Question-reading on the runtime-declared templates (accuracy, test, zero labels; Laya FT: 5-seed mean). Last
column: share of item-seed pairs on which Laya FT returned the answer it gives to the source v2 question.}\label{tab:g3}
\setlength{\tabcolsep}{2.2pt}
\begin{tabular}{@{}lccccc@{}}\toprule
template & Laya base & Laya FT & AnyJev L0 & Jev & same answer \\\midrule
most loaded neighbour & 0.197 & 0.009 & 0.990 & 0.987 & 0.993 \\
least loaded, RSRP $\ge$ bound & 0.143 & 0.691 & 0.713 & 0.987 & 0.991 \\
strongest signal & 0.153 & 0.437 & 0.927 & 1.000 & 0.995 \\
jitter above SLA share & 0.603 & 0.493 & 0.597 & 1.000 & 0.982 \\
weak neighbour & 0.570 & 0.461 & 0.743 & 1.000 & -- \\
congestion level & 0.287 & 0.351 & 0.573 & 1.000 & -- \\
root-cause KPI & 0.343 & 0.329 & 0.167 & 0.803 & -- \\
renamed keys & 0.193 & 0.825 & 0.690 & 0.677 & -- \\\bottomrule
\end{tabular}
\end{table}
On templates that keep a known item's state and options and change only the question, fine-tuned Laya returned its
known-question answer for 0.982--0.995 of item-seed pairs; asked for the \emph{most} loaded neighbour, it
scored 0.009. Its calibrated gate kept acting (0.808 of those items), and
0.801 of all items ended as act-and-wrong, as depicted in Fig.~\ref{fig:ts}. The pre-registered
retention endpoint nonetheless passed ($+$0.138 [$+$0.114, $+$0.163]), since templates whose answer correlates with
the trained rule dominate the pool; we report the endpoint as registered, with this measurement beside it. In contrast,
Jev read the questions (act-and-wrong $\le 0.032$ on 6 of 7 templates), and AnyJev L0 did so on most
templates. The mismatch evaluation covers the seven templates whose answer type has a v2 gate to apply, namely the
handover $k=4$ and $k=8$ gates for the choice templates and the SLA-breach gate for the yes/no templates; the
congestion-level template, a four-level score, has no v2 class of its answer type and is therefore not gated. The pre-registered outcome counts (safe / within the Proposition~\ref{prop:ts} bound / violating) at
$\alpha=0.05$ were 6/1/0 for Jev,
4/3/0 for AnyJev L0 and
1/6/0 for Laya FT. Laya's losses sit \emph{within} the bound
because its measured shift (0.18--0.90) is as large as the loss itself: this is the
worst case of Proposition~\ref{prop:ts}, and the bound, although correct, requires labels of the new type in order to be
evaluated. Only a declared-type check prevents the failure.

Table~\ref{tab:g3} decomposes the result per template. The fine-tuned specialist gains exactly where the new question's
answer coincides with the trained rule, e.g., under renamed keys (0.193 $\rightarrow$
0.825) and for the RSRP-constrained handover (0.143 $\rightarrow$
0.691), and loses exactly where it inverts the rule, which is the signature of a decision function
of (state, options, family) rather than of the question. The base checkpoint does not read the new questions either,
since its accuracy stays near the chance level on most templates, whereas Jev reached at least 0.98 on six of the eight
templates and AnyJev L0 reached 0.990 on the most-loaded question that defeats the specialist.

\subsection{A5: timeliness}
\begin{figure*}[t]\centering\includegraphics[width=\textwidth]{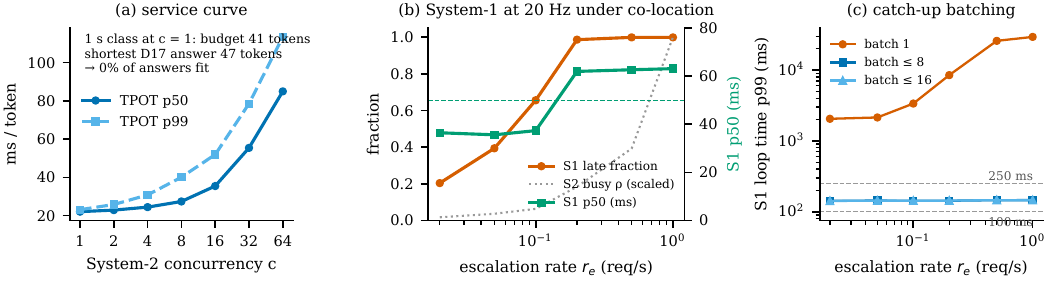}
\caption{Timeliness on the shared L4 (System-1: Laya at $k=8$, 20 decisions/s; System-2: Qwen3-8B-AWQ under vLLM).
(a) System-2 service curve; the ``shortest answer'' annotation refers to the shortest free-text System-2 answer on the
test split (47 tokens). (b) A 20\,Hz System-1 under co-located escalations; the busy-fraction curve
$\rho$ in this panel is the System-2 busy fraction $u_2$ of Proposition~\ref{prop:coloc}. (c) The same loop with catch-up batching of System-1 (loop time = completion $-$ due
time).}\label{fig:time}
\end{figure*}
System-1 timeliness separates the design points before any escalation is considered. The local specialists stay within
about 30\,ms at p99 up to $k=8$ (Table~\ref{tab:fw} lists $k=4$), whereas the zero-label AnyJev L0 grows with $k$, since
it marginalizes over $k$ cyclic option orders. Jev's median latency is flat in $k$, as expected of a network-bound service
whose TCP connection alone takes 199\,ms at the median from the platform's site, and its warm-connection
p50 (281\,ms at $k=4$) already sits above the tightest near-RT classes.

At concurrency 1, the co-located System-2 decodes at 22.1\,ms/token after a 94\,ms first token, so a
1\,s class leaves a budget of 41 tokens, while the shortest System-2 answer was 47
tokens (45 quantized). Free-text escalation to this System-2 at or below 1\,s is therefore infeasible (Lemma~\ref{lem:floor}), and by
Corollary~\ref{cor:collapse} the optimal gate in that regime is act-or-abstain. Co-location, in turn, costs more than the
escalation itself (Fig.~\ref{fig:time}b): while System-2 decodes, each System-1 decision takes 67\,ms at
p99, above the 50\,ms period, so that at one escalation every 50\,s, 20\% of decisions are already late.
Proposition~\ref{prop:coloc} places the stability boundary at a System-2 busy fraction $u^*\approx0.66$ (with
$T=50$\,ms, $s_i=36.2$\,ms and $s_b=62.0$\,ms). Between 0.1 and 0.2 requests/s System-2's
offered load rises from 0.64 to 1.88, and its busy fraction, $u_2\approx1-e^{-\rho}$ under the
light-load approximation, from 0.47 to 0.85, i.e., across $u^*$; the System-1 median switches in the
same interval from near its idle value (37.4\,ms at 0.1 requests/s) to its busy value (62.0\,ms),
as the proposition predicts. The same
proposition indicates the remedy, namely bringing the busy-period cost per decision below $T$. Serving every already-due
decision in one forward pass (catch-up batching, Fig.~\ref{fig:time}c) achieves this: during decode bursts, batches of
2 form at 41.5\,ms per decision (60.6\,ms unbatched), and the loop-time p99 stays at
142--146\,ms at every rate up to 1 request/s, against 2.0--29\,s without
batching. No decision misses a 250\,ms deadline, but 12--56\% miss 100\,ms.

\begin{figure}[t]\centering\includegraphics[width=\linewidth]{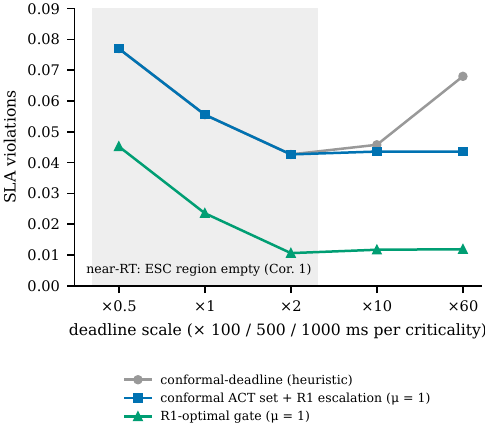}
\caption{Theorem~\ref{thm:gate} in a trace-driven replay: SLA violations vs.\ deadline scale ($\alpha=0.05$). The R1
escalation rule equals the heuristic deadline check wherever escalation is infeasible and improves on it where it opens.
The R1-optimal gate abstains more (coverage 0.629 vs.\ 0.861).}\label{fig:r1}
\end{figure}
Where escalation becomes feasible, Theorem~\ref{thm:gate} determines which decisions to escalate. In a trace-driven replay
with measured System-1 and System-2 latencies and accuracies, we replaced the heuristic rule ``escalate if System-2's p99
fits'' with the rule of Theorem~\ref{thm:gate} (termed R1 hereafter; pricing a miss like an error); this changed nothing at up to $2\times$ the class deadlines,
and at $60\times$ reduced SLA violations from 0.068 to 0.044, escalating only SLA-breach items
(14.2 per 3,645 decisions), which is the one class where System-2 is more accurate than System-1
(Fig.~\ref{fig:r1}). A third gate, termed R1-optimal, applies Theorem~\ref{thm:gate} with a tuned-threshold \ACT{} set; it records the fewest SLA violations at every
deadline scale, but it does so by abstaining more (coverage 0.629 at $60\times$), which is the coverage
side of the same trade-off; the like-for-like comparison is therefore between the two gates that share the conformal \ACT{} set and differ only in
the escalation rule (conformal+R1 and conformal-deadline in Fig.~\ref{fig:r1}). With $\mu=0$, i.e., when a missed deadline is not priced,
the rule also escalated a few handover decisions whose answers then expired, as the first remark on
Theorem~\ref{thm:gate} anticipates.

\subsection{A6: cardinality}
Laya's option budget truncates each option to a few tokens at $k\ge16$, so that direct choice there is at chance level,
and AnyJev labels options with letters, which limits it to $k\le26$. Above the option budget, the map of
Section~\ref{sec:apply} serves the local specialists through the group-and-final reduction of Section~\ref{sec:theory}
(groups of 8 followed by a final), with its measured accuracy, latency and certified gate (Table~\ref{tab:gate}); the
analysis of where this reduction pays off is reported in an extended version of this work.

\subsection{A7: rewording}
Under four rewordings of the known questions, fine-tuned Laya's accuracy moved by at most 0.4\,pts (handover $k\le8$:
0.908 at the original wording), and the pre-registered rewording endpoint held in 4 of 4
families. Given A4, this is insensitivity rather than robustness: an invariance test cannot tell the two apart, whereas a
counterfactual question can. The plain encoder was stable under paraphrase, but fell from 0.948 to
0.263 when the handover rule was written into the question; AnyJev L2's trained heads dropped
24--45\,pts; and Jev \emph{gained} when the rule was stated (anomaly 0.347 $\rightarrow$
0.880). Certified robustness to word edits by randomized smoothing~\cite{cohen,ranmask} did not
separate the models within this budget: over five seeds, 75--84\% of certified-correct decisions
sit at the largest lower bound that 100 samples can give, so the certified radius reflects the sampling budget and not the
model, and we make no radius claim.

\subsection{A8: energy and sovereignty}
In terms of energy efficiency, a $k=4$ decision on the L4 costs 0.55\,J with the encoder, 1.10\,J with
Laya and 19.6\,J with AnyJev-8B L0, i.e., more than an order of magnitude more for the LLM-based
question-reader, and among the LLM-based variants only the smaller AnyJev-1.7B with trained L2 heads comes close to the
specialists, at the price of markedly lower accuracy on the known classes. Jev's energy is not measurable locally, and its
inputs, i.e., network state and KPIs, leave the operator's domain, which turns sovereignty into a deployment criterion
alongside latency and energy. Each framework was also placed behind the same typed endpoint of the core-policy loop of
Fig.~\ref{fig:testbed}, and every chosen policy change reached the SMF and the UPF data path; since those decisions were
zero-shot and have no outcome labels, the loop establishes integration rather than policy quality, and its
per-framework analysis is reported in an extended version of this work.
\begin{figure}[t]\centering\includegraphics[width=\linewidth]{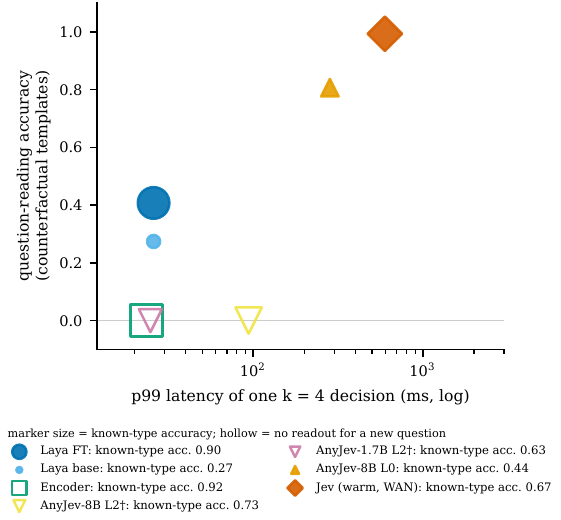}
\caption{The trade-off space: latency, question-reading and known-type accuracy per framework variant.}\label{fig:trade}
\end{figure}

\section{Where Each Framework Applies}
\label{sec:apply}
\begin{figure*}[t]\centering\includegraphics[width=\textwidth]{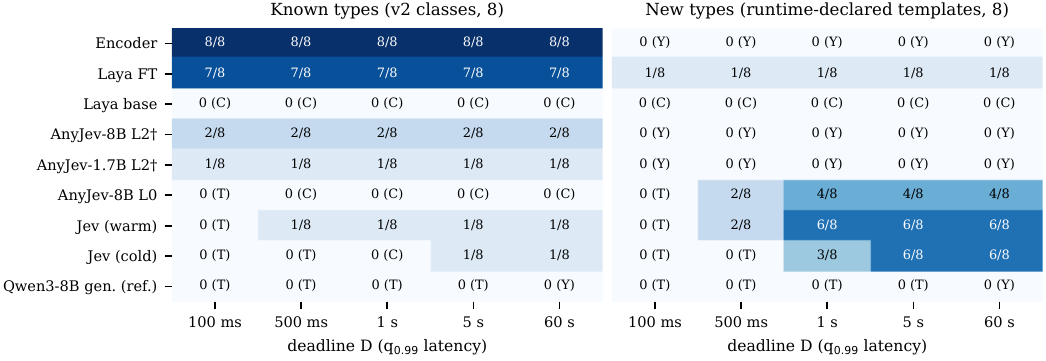}
\caption{Applicability map (Proposition~\ref{prop:A}): per framework variant and deadline, the number of decision classes
for which timeliness (T), type (Y), coverage (C) and cardinality (K) all hold at $\alpha=0.05$, $c_d=0.5$ on a dedicated
GPU; ``0 (P)'' names the most frequent violated predicate. $\dagger$: extension with trained L2 heads.}\label{fig:map}
\end{figure*}
Fig.~\ref{fig:map} evaluates Proposition~\ref{prop:A} for every framework, class and deadline from the measurements alone.
For known types, the fine-tuned specialists are applicable from 100\,ms on a dedicated GPU (encoder
8/8, Laya 7/8); with a decoding System-2 co-located on the same GPU the 100\,ms
column does not hold, since 12--56\% of decisions are late even with catch-up
batching (Section~\ref{sec:results}), and the first applicable class is then about 250\,ms. No zero-shot framework reaches the coverage floor on a known rule-defined class, except Jev on
SLA breach. For new types, the question-reading frameworks are applicable, and only from 0.5--1\,s: Jev
6/8 (warm connection) and AnyJev L0 4/8 at 1\,s. The specialists fail (Y) or (C), with
the exception of fine-tuned Laya on the renamed-keys template (1/8), which is its known question under
new option names. Fig.~\ref{fig:trade} presents the same trade-off as a design space.

\textbf{Per framework.} \emph{Jev} offers question-reading at zero training labels: it answered the counterfactual
templates at 0.99, gained on anomaly severity when the rule was stated in the question, and its gate was safe on six of seven new
templates. Its cost is a network-bound latency that never reached 100\,ms (warm p50 281\,ms, cold
726\,ms), low coverage on rule-defined known classes, and the loss of sovereignty over the decision
inputs. \emph{Laya} offers a fast, local and fine-tunable typed head (p99 25.8\,ms; 1.10\,J), which
reaches specialist accuracy on its trained questions, although
its certified \ACT{} rate at handover $k=64$ (0.43, Table~\ref{tab:gate}) falls below the coverage floor, which makes handover $k=64$
the one known class it does not serve in Fig.~\ref{fig:map}; after fine-tuning it
does not read the question, so it must only serve declared types on which it was calibrated, and a new type costs about
100 labels and a full fine-tuning pass (0.78--0.97 on the new templates). Where its label
efficiency exceeds that of a plain encoder (handover at 100--300 labels), Laya is the natural specialist to stand up
quickly; otherwise a plain encoder performs as well or better, at approximately half the energy per decision.
\emph{AnyJev} offers an open question-reader: L0 read the counterfactual templates at 0.81, but it is slow (p99
873\,ms at $k=8$), and acts rarely at a strict risk budget, while
its trained L2 heads are fast but inherit the layout-bound failure of any per-type head.

\textbf{Trade-off.} The characterization highlights a practical trade-off between speed and generality that no single
design point resolves. The local specialists meet near-RT deadlines at 0.55--1.10\,J per decision but answer only the
questions on which they were trained, whereas the question-reading frameworks generalize to runtime-declared questions
at the price of a latency that excludes the classes below about 500\,ms, an order of magnitude more energy (AnyJev L0) or
data that leave the operator's domain (Jev). Type safety is common to all three and settles none of these costs.

\textbf{Deployment guidance.} The results translate into four rules: (i) dispatch on the declared decision type and not on
confidence, sending known types to a local specialist behind a per-type gate and new types to a question-reading model or
to the safe default; (ii) treat near-RT classes as act-or-abstain, do not provision free-text escalation to an 8B-class on-edge LLM at or
below 1\,s, and never co-locate a decoding System-2 with a batch-1 System-1 that serves a periodic loop; with catch-up batching,
co-location is safe for deadlines of about 250\,ms and above, but not for 100\,ms classes; (iii) budget $\lceil1/\alpha\rceil-1$ labels per new
type even for zero-shot frameworks, and hundreds to a thousand for a specialist; and (iv) weigh hosted question-reading against
sovereignty, since the data leave the domain.

\begin{table}[t]\centering\scriptsize
\caption{Pre-registered endpoints of the analyses reported in this paper and their outcomes (all reported as registered).}\label{tab:endpoints}
\setlength{\tabcolsep}{2.5pt}
\begin{tabular}{@{}p{0.34\columnwidth}p{0.36\columnwidth}p{0.2\columnwidth}@{}}\toprule
Endpoint & Measured & Outcome \\\midrule
Label efficiency (ratio $\ge3$ on $\ge3$ families) & 2 of 4 families & failed \\
Retention on new templates (FT $-$ base) & $+$0.138 [$+$0.114, $+$0.163] & held; question-blind \\
Rewording (Laya drop $\le$3\,pts where AnyJev L2 $\ge$10) & 4 of 4 families & held; insensitivity \\
Type mismatch: question-readers safe or within bound; Laya FT violates & Jev 0, AnyJev 0, Laya FT 0 violations & half held \\
R1 gate escalates only where $\pi_2>\max(m,\beta)$, equals the deadline check at $\le2\times$ & only SLA breach; identical at $\le2\times$ & held \\
1\,s class infeasible at $c{=}1$ (Lemma~\ref{lem:floor}) & 41 tokens vs.\ $\ge$45 & held \\\bottomrule
\end{tabular}
\end{table}

\section{Discussion}
\label{sec:discussion}
\textbf{Typed dispatch as an architectural pattern.} The results argue for moving the notion of type from the output
domain to the decision itself. A holistic control stack can declare, for every decision it issues, the question template
and option schema together with the deadline class, and route the decision on that declaration: to a local specialist
behind a per-type gate when the type is known and calibrated, to a question-reading framework when the type is new and
the deadline allows it, and otherwise to the safe default. The declared type then carries the certificate of
Theorem~\ref{thm:crc}, and Proposition~\ref{prop:ts} explains why no confidence signal can replace it. This pattern is
compatible with agentic O-RAN designs that already exchange typed policy instances between the non-RT and near-RT
tiers~\cite{a1gent}, and with arbitration layers that govern actions after they are proposed~\cite{xtruce,aura}; the
presented characterization supplies the per-type timing, coverage and label figures that such a dispatcher needs.

\textbf{Evaluation methodology.} Two pre-registered generality endpoints passed while the underlying capability was
absent, which carries a lesson beyond the frameworks evaluated here. Invariance tests, i.e., rewording a question and
checking that the answer does not change, cannot distinguish a model that understands the question from one that ignores
it. Counterfactual questions, which keep the state and options and change the question so that the correct answer
changes, can. Characterizations of decision models for network control should therefore include counterfactual
questions, and should report the share of unchanged answers next to accuracy.

\textbf{Energy and sovereignty.} Energy per decision separates the design points by more than an order of magnitude at
equal option count (Section~\ref{sec:results}), and in the presented measurements the most energy efficient choice for known
types, namely the plain encoder, was also the most accurate one. The hosted framework removes the local energy cost from
the operator's accounting but moves the decision inputs, i.e., network state and KPIs, outside the operator's domain.
Both properties belong in the deployment decision next to latency and accuracy, and neither is visible in an
accuracy-only comparison.

\section{Limitations}
\label{sec:limits}
The presented characterization carries the following limitations, which are stated plainly. (1) The pre-registered
label-efficiency claim held in only 2 of 4 families, as summarized in Table~\ref{tab:endpoints}. (2) Fine-tuned Laya is
question-blind, and two pre-registered generality endpoints passed only because of it. (3) Plane A (E2/near-RT RIC) is not
closed, so the near-RT results are measured in isolation and by replay. (4) There is no real RAN, and RAN timing is scoped
by a real-time factor. (5) Labels come from simulator rules. (6) One 8B LLM serves as both System-2 and accuracy ceiling.
(7) The shift split barely moves the state distribution, so the shift results are weak evidence, with the one
confidence-level exception noted in Section~\ref{sec:results}. (8) Jev is hosted and versioned (jev-1.13.0), and its
behaviour may change. (9) The analyses of the group-and-final reduction and of the per-framework core-policy loop,
together with their pre-registered endpoints, are not part of this paper; the map uses only the measured accuracy,
latency and gate of the reduction. (10) Question-contrastive fine-tuning and GPU partitioning (NVIDIA Multi-Process
Service, MPS, not run on the shared edge GPU) were not measured, and the applicability map is computed for a dedicated
GPU. (11) The label-efficiency comparison does not separate the typed head from the backbone size, since Laya's backbone
is ModernBERT-large and the plain encoder's is ModernBERT-base. (12) Certified robustness radii are bounded by the
certification sample budget ($n=100$).

\section{Conclusion}
\label{sec:concl}
This paper presented a theory-driven characterization of type-safe decision frameworks for agentic 5G control, which
aims to establish where each design point can be applied rather than to rank them. The manuscript not only presented a
theoretical framework whose results compose into checkable applicability predicates, but also measured every predicate
for a hosted, an open fine-tunable and a retrofit framework on an Open5GS/UERANSIM platform with a closed core-policy loop
and an NVIDIA L4 edge GPU. The characterization showed that type safety makes agentic control executable, but that safety
of the output type is not safety of the decision: a fine-tuned typed encoder answered its training question for
98--99.5\% of changed questions while its calibrated gate kept acting, whereas the
question-reading frameworks acted wrongly on at most 0.143 and 0.137 of any changed question,
at 11--29 times the latency for the hosted Jev and at an 8B language model's energy cost for
AnyJev. In addition, free-text escalation to the evaluated 8B-class on-edge LLM proved infeasible at or below 1\,s, a
co-located LLM made a batch-1 20\,Hz loop late at any escalation rate unless System-1 batched its due decisions, and
certified action required at least 19 labels per new decision type even for zero-shot frameworks. In contrast to
evaluations that rank models by accuracy, the presented applicability map assigns known decision types to local specialists from 100\,ms on a dedicated GPU
(from about 250\,ms when a decoding System-2 shares it) and runtime-declared types to question-reading frameworks from 0.5--1\,s, subject to energy and
sovereignty constraints. Future work will focus on question-contrastive fine-tuning of typed encoders, on closing the
near-RT loop through the E2 interface with a real RAN, and on GPU partitioning between System-1 and System-2.

\section*{Acknowledgment}
The research leading to these results has been supported by the SHARC project (Grant Agreement No.~101290994) and the
PQ-NEXT project (Grant Agreement No.~101225759).

\bibliographystyle{IEEEtran}
\bibliography{references}

\appendix[Proofs]
\begin{IEEEproof}[Proof of Theorem~\ref{thm:gate}]
The program of Section~\ref{sec:theory} is a linear program over randomized rules; under Slater's condition (some rule
meets both constraints strictly) strong duality holds~\cite{altman}, and the Lagrangian, an expectation over $(m,\sigma)$
of the chosen action's value, is maximized pointwise. Under independence, $V_{\ESC}=F_2(\pi_2-\beta+\mu)-\mu-\kappa$.
If $\pi_2\le m$, then $V_{\ACT}-V_{\ESC}\ge(1-F_2)(m-\beta+\mu)+\kappa\ge0$ on
$\{m\ge\beta\}$, while on $\{m<\beta\}$, $V_{\ACT}<0=V_{\ABS}$.
\end{IEEEproof}

\begin{IEEEproof}[Proof of Corollary~\ref{cor:collapse}]
If $F_2(\sigma)=0$, then $V_{\ESC}=-\mu-\kappa\le0$. Otherwise, under independence,
$V_{\ESC}=F_2(\pi_2-\beta+\mu)-\mu-\kappa\le F_2(\pi_2-\beta)\le\max(\pi_2-\beta,0)\le\max(V_{\ACT},V_{\ABS})$.
\end{IEEEproof}

\begin{IEEEproof}[Proof of Lemma~\ref{lem:floor}]
After the first token, decoding emits one token per step, and the concurrency-1 step time is the shortest the server
attains, so the sum is a lower bound that is attained at concurrency 1.
\end{IEEEproof}

\begin{IEEEproof}[Proof of Proposition~\ref{prop:coloc}]
The server's service rate is $1/s_b$ during System-2's busy time and $1/s_i$ during its idle time, so its long-run
capacity is $u_2/s_b+(1-u_2)/s_i$ decisions per unit time when the alternation is slow relative to $s_b$; the load
condition $1/T<u_2/s_b+(1-u_2)/s_i$ of a single server with modulated service~\cite{loynes} rearranges to
$u_2<s_b(T-s_i)/(T(s_b-s_i))$. During a busy period one decision is served per $s_b$ while one arrives per $T$, so the
backlog grows by $s_b-T$ per period.
\end{IEEEproof}

\begin{IEEEproof}[Proof of Theorem~\ref{thm:crc}]
The loss $L_i(\tau)=\mathbf{1}\{m_i\ge\tau,\text{wrong}_i\}$ lies in $[0,1]$ and is
non-increasing in $\tau$. With $T_n$ the candidate set of $\hat\tau$, let
$\hat\tau'$ be the smallest $\tau\in T_n\cup\{m_{n+1}\}$ with $\frac{1}{n+1}\sum_{i\le n+1}L_i(\tau)\le\alpha$. Since
$L_{n+1}\le1$, $\hat\tau$ is feasible for this problem, so $\hat\tau'\le\hat\tau$ and, by monotonicity,
$L_{n+1}(\hat\tau)\le L_{n+1}(\hat\tau')$. The threshold $\hat\tau'$ is a symmetric function of $n+1$ exchangeable
decisions, hence $\mathbb{E}[L_{n+1}(\hat\tau')]=\mathbb{E}[\frac{1}{n+1}\sum_i L_i(\hat\tau')]\le\alpha$, following
conformal risk control~\cite{crc}. For the shift, condition on the calibration set, which fixes $\hat\tau$; the loss is a
$[0,1]$-valued function of $Z$, so $\mathbb{E}_Q L\le\mathbb{E}_P L+d_{\mathrm{TV}}(P_Z,Q_Z)$, and averaging over the
calibration set returns the marginal bound.
\end{IEEEproof}

\begin{IEEEproof}[Proof of Lemma~\ref{lem:C}]
At $\tau=\infty$ the left side of the criterion is $1/(n+1)$; any finite $\tau$ therefore needs
$1/(n+1)\le\alpha$.
\end{IEEEproof}

\begin{IEEEproof}[Proof of Proposition~\ref{prop:ts}]
The bound is Theorem~\ref{thm:crc} with $Q=P^{t'}$. For tightness, fix the calibration set (hence $\hat\tau_t$), give $m$ the
same law under $t$ and $t'$, let $a=\Prob(m\ge\hat\tau_t)$ be the \ACT{} rate, let the gate of type $t$ have
$\Prob_t(\ACT\wedge\text{wrong})=\alpha_t$ exactly, let every acted decision be wrong under $t'$, and couple correctness on
the non-acted decisions identically under both types. Then the two laws of $Z$ differ only in the mass $a-\alpha_t$ that
moves from acted-and-correct to acted-and-wrong, so $d_{\mathrm{TV}}(P^t_Z,P^{t'}_Z)=a-\alpha_t$, while
$\Prob_{t'}(\ACT\wedge\text{wrong})=a=\alpha_t+d_{\mathrm{TV}}$: the bound is attained.
\end{IEEEproof}

\begin{IEEEproof}[Proof of Proposition~\ref{prop:A}]
Compose Theorem~\ref{thm:crc} (risk), Lemma~\ref{lem:floor} and
Proposition~\ref{prop:coloc} (time) and Lemma~\ref{lem:C} (labels); (K) holds by the definition of $K_F^{\max}$, with the
latency of the reduction entering (T), and (C) is the coverage condition itself.
\end{IEEEproof}
\end{document}